\documentclass[runningheads]{llncs}
\usepackage[T1]{fontenc}
%
\usepackage{graphicx}
%
%

\usepackage{amsmath}
\usepackage{amssymb}
\usepackage{amsfonts}

\newcommand{\classfont}[1]{\mathbf{#1}}
\newcommand{\clP}{\classfont{P}}
\newcommand{\clPP}{\classfont{PP}}
\newcommand{\clPH}{\classfont{PH}}
\newcommand{\clDRMA}{\classfont{DRMA}}
\newcommand{\clFRMA}{\classfont{FRMA}}
\newcommand{\clParityP}{\oplus\classfont{P}}
\newcommand{\clSharpP}{\#\classfont{P}}
\newcommand{\trT}{\mathcal{T}} 
\DeclareMathOperator{\expect}{\mathbb{E}}
\renewcommand{\phi}{\varphi}
\DeclareMathOperator{\poly}{poly}

\title{Structural Complexity of Rational Interactive Proofs}

\author{Daniil Musatov\inst{1,2,3}\orcidID{0000-0002-1779-2513} \and Georgii Potapov\inst{1} \orcidID{0009-0002-2670-2206}}
\authorrunning{D. Musatov and G. Potapov}
%
\institute{Moscow Institute of Physics and Technology, Dolgoprudny, Russia \and
Russian Presidential Academy of National Economy and Public Administration, Moscow, Russia \and
Caucasus Mathematical Center at Adyghe State University, Maykop, Russia\\
}

\begin{document}

\maketitle
\begin{abstract}
    This is the full version of a paper submitted to the Computability in Europe (CiE 2023) conference, with all proofs omitted there.
    
    In 2012 P. D. Azar and S. Micali introduced a new model of interactive proofs, called {\lq\lq Rational Interactive Proofs\rq\rq}. In this model the prover is neither honest nor malicious, but rational in terms of maximizing his expected reward. In this article we explore the connection of this area with classic complexity results.

    In the first part of this article we revise the ties between the counting hierarchy and the hierarchy of constant-round rational proofs. We prove that a polynomial-time machine with oracle access to $\clDRMA[k]$ decides exactly languages in $\clDRMA[k]$, a coincidence unknown for levels of the counting hierarchy.
    
    In the second part we study communication complexity of single-round rational proofs. We show that
    the class defined by logarithmic-communi\-ca\-tion single-round rational proofs coincides with $\clPP$. We also show that single-round rational protocols that treat problems in $\clParityP$ as black-box samplers of a random variable require at least a linear number of bits of communication.
\end{abstract}

\section{Introduction}

Advances in computing technologies not only change our world, but also pave the way to new areas in theoretical computer science. Thus, in 1960s and 1970s emergence of full-scale computers led to evolvement of complexity theory and design of algorithms. Later, in 1980s and 1990s development of computer networks led to invention of interactive proofs and cryptographic protocols. In 2012 a new field that reflects commercial cloud computing was shaped, namely the field of rational interactive proofs.

In cloud computing, a client sends a computational task to a server and receives some response. How can the client be sure that the response is indeed the result of the requested computation? If the task is an $\mathbf{NP}$ problem, then the server may send a certificate verifiable by the client. But what if the task is harder? The main idea of rational proofs is to create economic incentives for the server to do the computation correctly. To this end, the standard model of interactive proofs is expanded by a polynomially computable probabilistic reward paid by the client to the server. In order to maximize the expectation of this reward, the server reveals the correct answer.

Thus, while in standard interactive-proof and cryptographic models the prover may be either honest or malicious, here we consider a single type of the prover, the economically rational one. At the first glance, this does not change much for interactive proofs: the verifier may just pay 1 dollar for an accepted proof and 0 dollars for a rejected one. But things greatly change when we consider round and communication complexity. In our paper, we consider several concrete questions about them.

\subsection{History and related literature}

The area of interactive proofs was introduced in the works of Babai~\cite{Babai85} (for public random bits) and Goldwasser, Micali and Rackoff~\cite{GMR89} (for private random bits). Then Goldwasser and Sipser showed~\cite{GS86} that a protocol with private bits may be transformed to a $2$ rounds longer protocol with public bits. Babai and Moran demonstrated~\cite{BM88} that the number of rounds in a public-bit protocol may be halved. Together these results prove that any language recognizable by a constant-round interactive proof system is in fact recognizable by a two-round public-bit protocol and thus lies in $\mathbf{AM}$ and hence in the second level of the polynomial hierarchy. The seminal result $\mathbf{IP}=\mathbf{PSPACE}$~\cite{LFKN92,Shamir90,Shen92} showed that a polynomial number of rounds increases the class to $\mathbf{PSPACE}$. The case of multiple provers was also analyzed, and it turned out that the respective class $\mathbf{MIP}$ is equal to $\mathbf{NEXP}$~\cite{ben1988multi,fortnow1994power,babai1991non}.

Meanwhile, Wagner introduced~\cite{Wagner86} the counting hierarchy, an analog of the polynomial hierarchy, where the existential and universal quantifiers are replaced by majority quantifiers. The famous Toda's theorem~\cite{Toda91} implies that the whole polynomial hierarchy lies in the second level of the counting hierarchy.

Interactive proofs and the counting hierarchy meet in the area of rational interactive proofs that was pioneered by Azar and Micali in \cite{azar2012rational}. They introduced the notion of a rational proof, defined the functional class $\mathbf{FRMA}$ and the decision class $\mathbf{DRMA}$ and proved that $\mathbf{DRMA}$ with constant round complexity equals the counting hierarchy, thus showing the difference from $\mathbf{IP}$ with constant rounds. On the other hand, if the number of rounds is polynomial, then the respective class, denoted by $\mathbf{RIP}$, is still equal to $\mathbf{PSPACE}$. 

The subsequent analysis went in several directions. Firstly, the concept of rational proofs was also expanded to the case of multiple provers. Since the payoffs of the provers are mutually dependent, the provers play some game with each other and the result depends on the rules of this game. Chen, McCauley and Singh consider both cooperative~\cite{chen2016rational} and non-cooperative~\cite{chen2019non} approaches and show that the respective classes are not only wider than $\mathbf{RIP}$ but also wider than $\mathbf{MIP}$ (the exact characterization depends on fine details of the model). Thus, rationality expands the class of recognizable languages for several provers.

Secondly, various models with computationally limited prover and/or verifier were considered. Azar and Micali introduce~\cite{azar2013super,azar2014super} super-efficient rational proofs, where the verifier is logarithmically time-bounded and the honest prover need not do more computation than to solve the problem. They characterize in these terms classes $\mathbf{TC}^0$, $\mathbf{P}^{\|\mathbf{NP}}$ and $\mathbf{P}^{\|\mathbf{MA}}$. They also show that the verifier needs only a polynomial budget: deviation from the optimal strategy leads to a considerable loss. Guo et al.~\cite{guo2014singleround} stepped from $\mathbf{TC}^0$ to $\mathbf{NC}^1$ and Chen, McCauley and Singh~\cite{chen2018efficient} improved the bounds for utility gaps. Campanelli and Gennaro constructed a sequentially composable protocol~\cite{campanelli2015sequentially} and yet another space-efficient one~\cite{campanelli2017efficient}. Inasawa and Yasunaga explored~\cite{inasawa2017rational} a variant where both prover and verifier are rational.

\subsection{Exposition of the results}

In this paper we return to the original Azar-Micali framework and explore the relations between the counting and the $\mathbf{DRMA}$ hierarchies. In~\cite{azar2012rational} it is shown that a language recognizable by a $k$-round rational proof lies somewhere between the $k$th and the $(2k+1)$st levels of the counting hierarchy. We try to shed light on whether these bounds are tight. Is it possible that the two hierarchies coincide?

We answer in the following way: either the hierarchies are different, or their coincidence would lead to resolving a long-time open question.

\textbf{Result 1} (Corollary~\ref{crl:p-drma}): $\clP^{\clDRMA[k]} = \clDRMA[k]$.

On the contrary, the equality $\clP^{C_k\clP}=C_k\clP$ is not known: for instance, by Toda's theorem $\mathbf{PH}\subset\mathbf{P}^{\mathbf{PP}}$, but the inclusion $\mathbf{PH}\subset\mathbf{PP}$ is not known. 

The next two results show that the levels of the $\mathbf{DRMA}$ hierarchy relate to each other in a way similar, but not equivalent to the counting classes.

\textbf{Result 2} (Theorem~\ref{thm:pp_with_frma_oracle}): $\mathbf{PP}^{\mathbf{DRMA}[k]}\subset\mathbf{DRMA}[k+1]$. 

We do not know whether this inclusion holds as equality. On the contrary, $\mathbf{PP}^{C_k\mathbf{P}}=C_{k+1}\mathbf{P}$ is just the definition of $C_{k+1}\mathbf{P}$ and always holds.

We do not know either whether the equality $\mathbf{DRMA}[k]=\mathbf{DRMA}[k+1]$ implies $\mathbf{DRMA}[k+1]=\mathbf{DRMA}[k+2]$, but we show the following connection between collapses of the two hierarchies.

\textbf{Result 3} (Corollary~\ref{thm:nice-corollary}):
If $\mathbf{DRMA}[k]=\mathbf{DRMA}[k+1]$, then the counting hierarchy collapses at most at the $(2k+1)$-st level. 

Finally, we estimate the communication complexity of 1-round proofs.

\textbf{Result 4} (Corollary~\ref{crl:pp_eq_drma_log}): $\mathbf{PP}$ is equal to $\mathbf{DRMA}[1, O(\log n)]$, i.e., to $\mathbf{DRMA}$ with one round and $O(\log n)$ sent bits.

Thus the question of whether $\mathbf{PP}=\mathbf{DRMA}[1]$ is equivalent to the question whether a polynomial number of communicated bits expands the class compared to a logarithmic number. We also briefly discuss the connection between $\clParityP$ and $\clDRMA[1]$ and limitations of our methods.

\subsection{Roadmap}

In Sect.~\ref{sect:prelim} we present formal definitions and basic facts about the counting hierarchy and rational proofs. Sect.~\ref{sect:DRMA} and~\ref{sect:comm} give an exposition of our results about the $\mathbf{DRMA}$ hierarchy and about communication complexity, respectively. In Sect.~\ref{sect:concl} we provide concluding remarks and pose some questions for further research.

\section{Preliminaries}\label{sect:prelim}
\subsection{Counting hierarchy}
Firstly we remind some basic definitions concerning the counting hierarchy. 
\begin{definition}
    The class $\clPP$ is the class of languages $A$ for which there exists a language $B\in\clP$ and a polynomial $q$ such that for all $x$
    \[
      x \in A \iff \frac{1}{2^{q(|x|)}}\left|\left\{y \in \{0,1\}^{q(|x|)} : (x, y) \in B\right\}\right| \geq \frac{1}{2}.
    \]
\end{definition}
We will also use an equivalent definition in which the fraction of acceptable certificates $y$ never equals exactly $1/2$. (In order to obtain it, one needs to add 2 new random bits and artificially shift the accepting probability by a tiny amount).

\begin{definition}The relativised class $\clPP^{\classfont{C}}$ is defined in the same way but with $B$ being a language in $\clP^C$ for some $C \in \classfont{C}$.
\end{definition}
\begin{definition}

    The counting hierarchy is defined recursively as follows:
    \begin{itemize}
        \item $C_0\clP = \clP$,
        \item $C_1\clP = \clPP$,
        \item $C_{k+1}\clP = \clPP^{C_k \clP}$ for $k\geq 1$,
        \item $\classfont{CH} = \bigcup_{k \geq 0} C_k\clP$.
    \end{itemize}
\end{definition}
It is clear that if $C_k \clP = C_{k+1}\clP$, then $\classfont{CH} = C_k \clP$. Indeed, we have 
$C_{k+2}\clP=\clPP^{C_{k+1} \clP}=\clPP^{C_k \clP}=C_{k+1} \clP$ and then proceed by induction.

\subsection{Rational interactive proofs}
Rational interactive proofs were introduced by P. Azar and S. Micali in~\cite{azar2012rational}. We will call the prover Merlin and we will call the verifier Arthur. Let us fix a finite alphabet $\Sigma$ (e.g. $\Sigma = \{0,1\}$) and assume we are working with languages over $\Sigma$.

We start from definitions of single-round protocols and the respective classes.

\begin{definition}
    The class $\clFRMA[1]$ (from Functional Rational Merlin-Arthur games with 1 round) is the set of all functions $f: \Sigma^* \to \Sigma^*$, for which there exist polynomial-time computable functions $R: \Sigma^* \to [0,1]$ and $\phi: \Sigma^* \to \Sigma^*$, such that for all $x \in \Sigma^*$ and for any 
    \[
        m^* \in \operatorname{Argmax}_m \mathbb{E}_{r} R(x; m, r),
    \]
    where $r$ is distributed uniformly on $\{0,1\}^{l(|x|)}$ for some polynomial $l$, we have
    \[
        f(x) = \phi(x; m^*).
    \]
\end{definition}

Here $m$ is the Merlin's hint given to Arthur, $r$ is Arthur's randomness, $R$ is the reward paid by Arthur to Merlin given as a finite binary fraction, and $\phi$ specifies how to use the hint for computing $f$. A message $m^*$ maximizing the expected reward must be a hint leading Arthur to the correct value of $f(x)$. 

If $f(x)$ has only binary values, it can be treated as a characteristic function of some predicate. It leads to a similar class of languages.

\begin{definition}
    The class $\clDRMA[1]$ (from Decisional RMA) is the set of all languages $S\subset\Sigma^*$ with characteristic functions $\mathbb{I}_S(x) = \mathbb{I}\{x \in S\}$ in $\clFRMA[1]$.
\end{definition}

Azar and Micali show explicitly or implicitly the following results:
\begin{theorem}[\cite{azar2012rational}]\label{thm:azar-micali-one-round}
    $\clSharpP\subset\clFRMA[1]$, $\clPP\subset\clDRMA[1]$, $\clParityP\subset\clDRMA[1]$.
\end{theorem}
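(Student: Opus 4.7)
The plan is to construct a single protocol that makes it uniquely optimal for Merlin to report the exact count of accepting witnesses, and then read off any polynomial-time function of that count --- the count itself (for $\clSharpP$), its majority bit (for $\clPP$), and its parity bit (for $\clParityP$). The key device is a strictly proper scoring rule; the Brier score is the simplest choice and fits the polynomial-time-computable, $[0,1]$-valued reward required by the definition of $\clFRMA[1]$.

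For $\clSharpP\subset\clFRMA[1]$, take $f\in\clSharpP$ with witness predicate $B\in\clP$ and witness length $q(|x|)$, write $N=2^{q(|x|)}$, and let $c=f(x)$ be the true count. Ask Merlin to send an integer $m\in\{0,1,\dots,N\}$ encoded in $q(|x|)+1$ bits. Arthur samples $r\in\{0,1\}^{q(|x|)}$ uniformly, computes the bit $b=\mathbb{I}[(x,r)\in B]$, and pays
\[
R(x;m,r)=1-\left(b-\frac{m}{N}\right)^{2},
\]
which lies in $[0,1]$ and is a finite binary fraction computable in polynomial time because $N$ is a power of two. Setting $p=c/N$ and $\hat p=m/N$, a direct expansion yields $\mathbb{E}_{r}R(x;m,r)=1-p+2p\hat p-\hat p^{2}$, a strictly concave quadratic in $\hat p$ whose unique maximum sits at $\hat p=p$, i.e.\ at $m=c$. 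Taking $\phi(x;m)=m$ then delivers $f\in\clFRMA[1]$.

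Both decisional inclusions follow from this protocol by post-processing $m^{*}=c$. For $\clPP\subset\clDRMA[1]$, invoke the equivalent definition of $\clPP$ in which the acceptance fraction is never exactly $1/2$ (so that the predicate ``$c/N\geq 1/2$'' is robust), and set $\phi(x;m)=\mathbb{I}[m/N\geq 1/2]$. For $\clParityP\subset\clDRMA[1]$, set $\phi(x;m)=m\bmod 2$. The only real obstacle in the whole argument is checking strict properness of the Brier score, which reduces to the one-variable calculus above; the remaining items --- $R$ taking values in $[0,1]$, its representation as a finite binary fraction of polynomial bit-length, and the polynomial-time computability of $\phi$ --- are all routine.
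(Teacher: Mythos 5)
Your proof is correct and follows essentially the same route as the cited Azar--Micali argument: a strictly proper (Brier) scoring rule makes reporting the exact witness count the unique optimum for the rational Merlin, and the two decisional inclusions are obtained by post-processing that count with a polynomial-time $\phi$. (The paper itself only cites this theorem without proof; the one point your derivation does not recover is the $1$-bit-communication protocol for $\clPP$ given later in Lemma~\ref{thm:PP_in_DRMA_1_1}, but the theorem as stated imposes no communication bound, so this is not a gap.)
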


Now we expand the definitions to multi-round protocols.

\begin{definition}
A $k$-round rational interactive protocol $\Pi$ consists of poly\-nomial-time computable functions ${V: \Sigma^* \to \Sigma^*}$, ${R: \Sigma^* \to [0,1]}$ and ${\phi: \Sigma^* \to \Sigma^*}$. For a string $x \in \Sigma^*$ the interaction in protocol $\Pi$ on input $x$ is defined as follows.
\begin{itemize}
    \item We denote Arthur's messages as $a_1, a_2, \ldots, a_k$ and Merlin's messages as $m_1, m_2, \ldots, m_k$. All messages are required to be of polynomial in $|x|$ length.
    \item Arthur's messages are computed as
    \[
        a_{i+1} = V(x; m_1, a_1, m_2,\ldots, a_i, m_{i+1}; r_1, \ldots, r_{i+1}),
    \]
    where $r_1, \ldots, r_k$ are Arthur's private random strings.
    \item We call $\phi$ the value function and we call the value $\phi(x; m_1, a_1, \ldots, m_k)$ the value computed in the protocol.
    \item We call $R$ the reward function and we call the value $R(x; m_1, r_1, \ldots, m_k, r_k)$ the reward.
\end{itemize}
\end{definition}

\begin{remark}
    The value computed in the protocol does not depend on the last message of Arthur, since this dependency can be included in $\phi$. We say that the reward depends only on Merlin's messages and Arthur's randomness, since Arthur's messages can be restored from this information.
\end{remark}

\begin{remark}
    In this paper, while describing protocols, we will talk about what messages Merlin sends. This is mostly needed for making a protocol's idea clearer to the reader and should be interpreted as a description of how Arthur processes received strings and what format is expected of Merlin (we assume that messages that deviate from described formats lead to $0$ reward).
\end{remark}

\begin{definition}
    In a $k$-round rational interactive protocol $\Pi$ with the reward function $R$, we say that Merlin is rational, if for all $i = 0, 1, \ldots, k-1$ it holds that
    \[
        m_{i+1} \in \operatorname{Argmax}_m \expect_{a_{i+1}} \max_{m_{i+2}}\expect_{a_{i+2}} \ldots\max_{m_k}\expect_{a_k} R(x; m_1, \ldots, a_i, m, a_{i+1},\ldots, m_k, a_k),
    \]
    i.e. a rational Merlin always chooses a message that maximizes the expected reward, conditional on the currently obtained information.
\end{definition}

\begin{definition}
    We say that a function $f$ is computed in a rational interactive protocol $\Pi$, if for all $x$ and for any Merlin's rational behavior it holds that
    \[
        f(x) = \varphi(x; m_1, a_1, \ldots, m_k).
    \]
\end{definition}

\begin{definition}
    Class $\clFRMA[k]$ is a set of all functions computable in some $k$-round interactive protocol.
\end{definition}

\begin{definition}
    Class $\clDRMA[k]$ is a class of languages $S$ such that their characteristic functions $\mathbb{I}_S(x) = \mathbb{I}\{x \in S\}$ are in $\clFRMA[k]$. When speaking about $\clDRMA[k]$-protocols we will often call a corresponding value function a decision function and denote it with $\pi$.
    
    By $\clDRMA$ we denote the union of all these classes:
    \[
        \clDRMA = \bigcup_{k\geq 0} \clDRMA[k].
    \]
\end{definition}

In the pioneering paper~\cite{azar2012rational}, the following connections between the counting hierarchy and the $\clDRMA$ hierarchy were established.
\begin{theorem}\label{thm:azar_micali_lower_bound}
    $C_k\clP \subseteq \clDRMA[k]$, for all $k \geq 0$.
\end{theorem}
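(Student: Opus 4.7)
The plan is to prove the inclusion by induction on $k$. The base case $k=0$ is immediate, since $C_0\clP = \clP$ and a zero-round protocol reduces to polynomial-time evaluation of the value function $\phi$. For the inductive step, assume $C_k\clP \subseteq \clDRMA[k]$ via a family of $k$-round protocols $\Pi_A$, and take $L \in C_{k+1}\clP = \clPP^{C_k\clP}$. Unfolding the definitions, fix a polynomial $q$, a polynomial-time oracle machine $M$, and a language $A \in C_k\clP$ so that $x \in L$ iff $p^{*} := 2^{-q(|x|)}|\{y : M^A(x,y)=1\}| \geq 1/2$; by induction, $A$ admits a $k$-round rational protocol $\Pi_A$.

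I would build a $(k+1)$-round protocol for $L$ as follows. In $m_1$, Merlin commits to a claimed probability $\tilde p$, and the decision function accepts iff $\tilde p \geq 1/2$. Arthur's challenge $a_1$ is a uniformly random pair $(y, i)$ with $y \in \{0,1\}^{q(|x|)}$ and $i \in [t]$ an index into the (padded, fixed-length) list of $A$-queries $M$ asks on $(x,y)$. In $m_2$, Merlin writes down the full trace of $M^A(x,y)$, i.e.\ the queries $q_1,\dots,q_t$ and the claimed answers $b_1,\dots,b_t$, and simultaneously opens $\Pi_A$ on input $q_i$. The remaining exchanges $a_2, m_3, \dots, a_k, m_{k+1}$ carry out the rest of $\Pi_A$. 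Let $\tilde b_y \in \{0,1\}$ be the outcome of simulating $M(x,y)$ on the claimed answers $b_j$, and let $c$ be the indicator that the output of $\Pi_A$ on $q_i$ agrees with $b_i$. Set the reward to
\[
    R \;=\; c\cdot\bigl(1 - (\tilde p - \tilde b_y)^2\bigr) \;+\; \alpha \cdot R_A,
\]
where $R_A$ is the reward inherited from $\Pi_A$ on $q_i$ and $\alpha>0$ is a sufficiently small constant.

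Rationality is verified from the inside out. The additive term $\alpha R_A$ forces Merlin to play rationally in $\Pi_A$, so by the inductive hypothesis the sub-protocol reveals the true bit $A(q_i)$; hence any lie in slot $j$ of the trace is caught (setting $c=0$) with probability at least $1/t$, and for $\alpha$ small enough the first summand makes truthful $b_j$'s strictly optimal. Once every $b_j$ is truthful, $\tilde b_y = \mathbb{1}[M^A(x,y)=1]$ for all $y$, and strict properness of the Brier term uniquely maximises its expectation at $\tilde p = p^{*}$, so Arthur's decision is correct. The main obstacle is the quantitative choice of $\alpha$: small enough that $R_A$ cannot tempt Merlin to distort $\tilde p$ away from $p^{*}$, yet large enough to keep Merlin rational inside $\Pi_A$. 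Propagating the utility-gap of $\Pi_A$ through the induction to pin down such an $\alpha$ is the only non-routine step; the rest is the scoring-rule idea from $\clPP \subseteq \clDRMA[1]$ combined with random-index checking of the oracle trace.
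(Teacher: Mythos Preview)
Your inductive skeleton matches the paper's, but the reward you write down has the weights the wrong way round, and the tension on $\alpha$ that you flag as ``the only non-routine step'' in fact has no solution. The summand $c\cdot\bigl(1-(\tilde p-\tilde b_y)^2\bigr)$ depends on the \emph{output} of $\Pi_A$ through $c$, so $\alpha R_A$ alone cannot force rational play inside $\Pi_A$: if Merlin has lied at the sampled slot $i$ he simply plays $\Pi_A$ so that its value function returns the false $b_i$, keeping $c=1$ at a cost of only $\alpha\,\Delta(\Pi_A)(q_i)$ in the second summand. Since $\Delta(\Pi_A)$ is of order $2^{-\poly(n)}$ while the $c$-term is $\Theta(1)$, ``$\alpha$ large enough to keep Merlin rational inside $\Pi_A$'' would require $\alpha\ge 1/\Delta(\Pi_A)$, i.e.\ $\alpha$ exponentially large---the opposite of a small constant. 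The correct scaling is the reverse: keep $R_A$ at full weight and multiply the outer check by a factor below $\Delta(\Pi_A)$.

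Even with the scaling reversed, the random-index trace check hits a second obstacle you do not mention: the optimal $\Pi_A$-reward depends on the instance $q_i$, and $q_i$ is determined by Merlin's claimed answers $b_1,\dots,b_{i-1}$, so Merlin can lie in the trace purely to steer the sampled query toward an instance with higher $R_A$; that gain is $O(1)$ while the now-tiny outer term can punish it only by $O(\Delta(\Pi_A))$. The paper sidesteps both issues by first proving $\clP^{\clFRMA[k]}=\clDRMA[k]$ (Theorem~\ref{thm:main_hierarchy}): all $t$ oracle calls of $M(x,y)$ are run in parallel inside a single $k$-round protocol with geometrically decreasing weights $\Delta^{j-1}/2^{j}$, so that honesty in the $j$-th call is settled before the $(j{+}1)$-st can interfere. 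Only then does Theorem~\ref{thm:pp_with_frma_oracle} add the extra round for the $\clPP$ layer, asking Merlin for the single bit $b=\mathbb{I}\{x\in A\}$, sampling $y$, running that packaged protocol on $(x,y)$, and paying $\tfrac12 R+\tfrac{\Delta}{4}\,\mathbb{I}\{b=\pi\}$; the inclusion $C_k\clP\subseteq\clDRMA[k]$ is then a one-line induction on~$k$.
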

\begin{theorem}\label{thm:azar_micali_upper_bound}
   $\clDRMA[k] \subseteq C_{2k+1}\clP$, for all $k \geq 0$. 
\end{theorem}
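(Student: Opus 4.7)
The plan is to translate the max/expectation semantics of the game tree of a $k$-round rational protocol into a tower of $\clPP$-quantifiers, with each round contributing exactly two levels. Given $L \in \clDRMA[k]$ with protocol $(V, R, \phi)$, I would define game-tree values by
\[
E_i = \expect_{r_i} U_{i+1}, \qquad U_i = \max_{m_i} E_i, \qquad U_{k+1} \equiv R,
\]
where $E_i$ is viewed as a function of $(m_1, r_1, \ldots, m_i)$ and $U_i$ as a function of $(m_1, r_1, \ldots, m_{i-1}, r_{i-1})$. After scaling out the polynomial-size denominators of the rewards, every $E_i$ and $U_i$ is an integer of polynomially many bits.

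The main inductive claim is that deciding $E_i \ge v$ is in $C_{2(k-i)+1}\clP$ and that deciding $U_i \ge v$ is in $C_{2(k-i)+2}\clP$. The base case $i=k$ is immediate: $E_k \ge v$ compares a $\clSharpP$-sum to a threshold, so it lies in $\clPP = C_1\clP$; then $U_k \ge v$ existentially guesses $m_k$ and queries this predicate, lying in $\mathbf{NP}^{\clPP} \subseteq C_2\clP$. For the inductive step, $E_i \ge v$ rewrites as a $\clPP$-threshold comparison of a sum whose terms are integer-valued $U_{i+1}$-queries of complexity $C_{2(k-i)}\clP$, placing it in $C_{2(k-i)+1}\clP$; then $U_i \ge v$ prepends an NP-layer of guessing $m_i$, yielding $C_{2(k-i)+2}\clP$. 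In particular $U_1 \ge v$ lies in $C_{2k}\clP$.

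To conclude, observe that $x \in L$ iff there exists a transcript $(m_1, r_1, \ldots, m_k, r_k)$ such that $\phi(x; m_1, a_1, \ldots, m_k) = 1$ and, for every $i$, Merlin's message $m_i$ attains the game value $U_i$ given the preceding prefix. Each rationality check collapses to a comparison of two oracle-returned integers, so the whole predicate is an existential guess relativized to a $C_{2k}\clP$-oracle, i.e.\ in $\mathbf{NP}^{C_{2k}\clP} \subseteq \clPP^{C_{2k}\clP} = C_{2k+1}\clP$. The main obstacle is the bookkeeping inside the inductive step: one must justify that comparing a sum of $C_j\clP$-oracle integers to a threshold costs only one additional $\clPP$-layer, which requires extracting the bits of the oracle values and handling the fixed-precision arithmetic carefully. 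Once this is in place, the round-by-round quantifier accounting yields exactly $2k+1$ levels.
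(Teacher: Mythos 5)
The paper itself does not prove this theorem --- it is imported from Azar and Micali~\cite{azar2012rational} --- so I can only judge your argument on its own terms. Your overall plan (a tower with one $\mathbf{NP}$/max layer and one $\clPP$/expectation layer per round, plus a final existential verification, giving $2k+1$ counting levels) has the right shape, and the bookkeeping you flag is standard and unproblematic: a threshold comparison of a sum of poly-bit integers computed in $\clP^{C_j\clP}$ sits in $\clPP^{C_j\clP}=C_{j+1}\clP$, and $\mathbf{NP}^{C_j\clP}\subseteq\clPP^{C_j\clP}$.

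The genuine gap is in the game-value recursion itself. You define $E_i$ and $U_i$ as functions of the prefix $(m_1,r_1,\ldots)$ and set $U_i=\max_{m_i}E_i$, which implicitly lets $m_i$ depend on $r_1,\ldots,r_{i-1}$. But in the model of this paper $r_1,\ldots,r_k$ are Arthur's \emph{private} random strings: Merlin observes only $a_1,\ldots,a_{i-1}$, and rationality is defined via $\operatorname{Argmax}_m\expect_{a_i}[\cdots]$ conditioned on the message transcript, i.e.\ on an information set, not on the realized coins. A clairvoyant Merlin who sees the $r_j$ can attain a strictly larger value and a different argmax set (take $a_1$ constant while the reward depends on whether $m_2$ matches $r_1$), so transcripts in which every $m_i$ attains your $U_i$ are in general neither a superset nor a subset of the transcripts produced by a rational Merlin, and your final equivalence ``$x\in L$ iff some transcript with $\phi=1$ attains the $U_i$'' fails. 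To repair this you must run the backward induction over message transcripts $(m_1,a_1,\ldots,m_i)$, where the relevant quantities are \emph{conditional} expectations --- ratios whose denominators count the $r$-prefixes consistent with the observed $a$'s --- rather than plain sums. Showing that comparing such ratios still costs only one $\clPP$ level per round (the normalizing constant is common to the two candidate messages at a given information set, but the nested values deeper in the tree carry their own, branch-dependent normalizations) is precisely the nontrivial part of the Azar--Micali argument. As written, your sketch goes through only for public-coin protocols in which $a_i$ reveals $r_i$.
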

Together these two results show that $\classfont{CH} = \clDRMA$.

The following notation simplifies the reasoning about Merlin's rationality.

\begin{definition}
    Given a rational interactive protocol $\Pi$ and some transcript $\trT_i$, \\${\trT_i = (x; m_1, a_1, \ldots, s), s \in \{m_i, a_{i-1}\}}$ of the first $i$ rounds (not necessarily obtained in an interaction with a rational Merlin), denote by $E^{\Pi}_i(\trT_i)$  the maximal expected reward if Merlin behaves rationally in all future rounds.
\end{definition}

Finally, to construct and analyze protocols, we need the following notion.
\begin{definition}
    Given a rational interactive protocol $\Pi$ denote by $\Delta(\Pi)$ a poly\-nomial-time computable function that gives a positive lower bound for the difference between the expected rewards for an optimal and a second optimal Merlin's messages at the first time he deviates from the rational behavior, i.e. for all $x$
    \[
        0 < \Delta(\Pi)(x) \leq \min_{\substack{\textsl{$i, \trT_i$; where $\trT_i$ is obtained in}\\ \textsl{an interaction with a rational Merlin,}\\ m' \notin \operatorname{Argmax}_m E^{\Pi}_{i+1}( \trT_i, m)}} \left( \max_m E^{\Pi}_{i+1}( \trT_i, m) - E_{i+1}^{\Pi}(\trT_i, m')\right).
    \]
\end{definition}
\begin{remark}
    In the model we are working with, where the reward should be expressed as a finite binary fraction and Arthur's random bits are distributed uniformly and independently, the expected reward can be written as a sum of finite binary fractions (the reward value multiplied by its probability) of length polynomial in input size. This allows us to use $\Delta(\Pi)(x) = 2^{-p(|x|)}$ for some polynomial $p$ dependent on $\Pi$. 
    The polynomial $p(n)$ can be taken as $T_R(n) + l_r(n)$, where $T_R(n)$ is the maximal time for the reward to be computed on $x$ with $|x|=n$ (and hence the value of the reward can be expressed as $\frac{k}{2^{T_R(n)}}$ for an integer $k$), and $l_r(n)$ is the number of random bits Arthur uses on the input of length $n$.  
    In other models (e.g. the ones where the reward is expressed as a ratio of two integers) the corresponding assumptions about existence of a computable $\Delta(\Pi)$ and adjustments in theorem statements and proofs should be made. It should also be noted that such functions exist for the protocols in \cite{azar2012rational}.
\end{remark}

\begin{remark}
    Function $\Delta(\Pi)$ is not the same thing as the reward gap as it used, for example, in
    \cite{guo2014singleround}. 
\end{remark}

\section{Results concerning DRMA-hierarchy}\label{sect:DRMA}

We start with proving that Arthur can begin an interactive proof with one Merlin and continue it with the other. Moreover, the second Merlin can be asked the expected reward of the first Merlin at the moment of switching.

\begin{lemma}\label{thm:new_main_lemma}
    For any integer $i, j \geq 0$, for any $\clFRMA[i + j]$-protocol $\Pi$ for computing function $f$ there exists a function $\widetilde{f}$ that: 
    \begin{enumerate}
        \item for any $x$ maps any transcript $\trT_i$ of the first $i$ rounds of interacting with a rational Merlin according to protocol $\Pi$ on input $x$ to $(E^{\Pi}_{i}(\trT_i), f(x))$,
        \item maps any transcript $\trT_i$ of $i$ rounds (i.e. of interaction between Arthur and an arbitrary prover) to $(E_i^{\Pi}(\trT_i), v)$ for some $v$,
        \item is in $\clFRMA[j]$.
    \end{enumerate}
\end{lemma}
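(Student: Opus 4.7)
The plan is to build a $j$-round rational protocol $\Pi'$ for $\widetilde f$ that takes $(x,\trT_i)$ as its input and asks a fresh Merlin (call him Merlin$_2$) both to declare the value $E^{\Pi}_i(\trT_i)$ and to carry out the last $j$ rounds of $\Pi$ starting from the state $\trT_i$. Concretely, in round $1$ of $\Pi'$ Merlin$_2$ sends a pair $(\widehat E,m_{i+1})$, with $\widehat E\in[0,1]$ a finite binary fraction; in rounds $2,\dots,j$ Arthur and Merlin$_2$ simulate rounds $i+2,\dots,i+j$ of $\Pi$ using fresh random bits $r_{i+1},\dots,r_{i+j}$. The concatenation with $\trT_i$ yields a full $(i+j)$-round $\Pi$-transcript on which Arthur computes $R_{\text{play}}:=R(x;m_1,r_1,\dots,m_{i+j},r_{i+j})$ and pays
\[
R'\;=\;(1-\Delta/4)\,R_{\text{play}}\;+\;(\Delta/4)\bigl(1-(\widehat E-R_{\text{play}})^2\bigr),
\]
where $\Delta=\Delta(\Pi)(x)$. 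The value function of $\Pi'$ returns $(\widehat E,\phi(x;m_1,a_1,\dots,m_{i+j}))$.

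Correctness I will establish by backward induction on $t=j,j-1,\dots,2$: at each step I show that every $\Pi'$-rational message is also $\Pi$-rational for the continuation of $\Pi$ at the corresponding state. Assuming $\Pi$-rational future play, the conditional expectation of $R_{\text{play}}$ given a proposed $m'_t$ is $E^{\Pi}_{i+t}(\trT_{i+t-1},m'_t)$, so switching from a non-rational $m'$ to a $\Pi$-rational $m^*$ gains at least $c_1\Delta=(1-\Delta/4)\Delta$ on the play term while the Brier contribution, which lies in $[0,c_2]$ with $c_2=\Delta/4$, can move against $m^*$ by at most $c_2$; the inequality $(1-\Delta/4)\Delta>\Delta/4$, equivalent to $\Delta<3$, closes the step for every $\Delta\in(0,1]$. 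The base case $t=1$ bundles $\widehat E$ with $m_{i+1}$: for each candidate $m_{i+1}$ the Brier term is uniquely maximized in expectation at $\widehat E=E^{\Pi}_{i+1}(\trT_i,m_{i+1})$, and plugging this back in reduces the joint optimization to maximizing a quantity dominated by the same play term; the same inequality forces $m_{i+1}$ to be $\Pi$-rational, whence $\widehat E=E^{\Pi}_i(\trT_i)$. Both requirements on $\widetilde f$ then follow: the first coordinate is always $E^{\Pi}_i(\trT_i)$, and when $\trT_i$ arose from a rational interaction the concatenated transcript is itself a rational $\Pi$-transcript, so $\phi(x;m_1,a_1,\dots,m_{i+j})=f(x)$.

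The hard part is calibrating the two summands of $R'$ by weights that are at the same time polynomial-time computable, expressible as finite binary fractions of polynomial length, confined to $[0,1]$ with sum at most one, and strong enough to make the play term uniformly beat the Brier term; this is exactly what the function $\Delta(\Pi)$ from the preliminaries delivers. Taking $\Delta=2^{-p(|x|)}$, the choice $c_1=1-\Delta/4$, $c_2=\Delta/4$ meets all four properties and gives $c_1\Delta>c_2$ with room to spare, which is crucial because a variance-minimizing deviation could otherwise gain up to $c_2$ on the Brier term. A secondary bookkeeping issue is that Arthur must be able to evaluate $R_{\text{play}}$ on the concatenated transcript, so the input $\trT_i$ is assumed to carry Arthur's random bits (or equivalent data) from its first $i$ rounds; this is standard and does not affect the round count or complexity of $\widetilde f$. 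The degenerate case $j=0$ reduces to the polynomial-time computation of $(E^{\Pi}_i(\trT_i),f(x))$ directly from $(x,\trT_i)$ and requires no protocol design.
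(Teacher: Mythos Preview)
Your argument is correct and is essentially the paper's proof: both combine the original reward with a Brier-type score on the declared expectation, scaled by $\Delta(\Pi)(x)$ so that deviating from the $\Pi$-rational continuation costs more on the play term than can ever be recouped on the Brier term. Your weights $(1-\Delta/4,\ \Delta/4)$ differ cosmetically from the paper's $(\tfrac12,\ \tfrac{\Delta}{4})$, and you make the backward induction over rounds explicit where the paper compresses it into a single sentence, but the mechanism and the key inequality are identical.
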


The idea of the proof is that for a sampleable random variable there is a protocol for determination of its expected value, and this protocol can be combined with the original protocol by scaling down the reward in the protocol for expectation. See the complete proof in the Appendix.

\begin{corollary}\label{thm:corol_from_main_1}
    For all $i, j \geq 0$, the following inclusions hold: 
    \[
        \clFRMA[i + j] \subseteq \clFRMA[i]^{(\clFRMA[j])[1]} \subseteq \clFRMA[i]^{||\clDRMA[j]},
    \]
    where {\lq\lq $[1]$\rq\rq} in the second class means that only one oracle query is allowed.
\end{corollary}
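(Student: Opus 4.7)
I plan to derive both inclusions almost mechanically from Lemma~\ref{thm:new_main_lemma}. The first inclusion $\clFRMA[i+j]\subseteq \clFRMA[i]^{(\clFRMA[j])[1]}$ will come from splitting a given $(i+j)$-round protocol after its $i$-th round and finishing with a single oracle call to the function $\widetilde{f}$ supplied by the lemma. The second inclusion will be obtained by the standard bit-decomposition of one functional query into polynomially many non-adaptive decisional queries.

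For the first inclusion, let $f \in \clFRMA[i+j]$ be computed by a protocol $\Pi$, and let $\widetilde{f} \in \clFRMA[j]$ be the function guaranteed by the lemma. I construct an $i$-round protocol $\Pi'$ with one oracle call to $\widetilde{f}$: Arthur plays the first $i$ rounds of $\Pi$ verbatim, recording the transcript $\trT_i$; he then queries the oracle on $(x,\trT_i)$ and receives a pair $(E,v)$; finally, he outputs $v$ as the decision and pays Merlin reward $E$. The reward lies in $[0,1]$ since $E = E^{\Pi}_i(\trT_i)$ is a max of expectations of $R\in[0,1]$, and it is polynomial-time computable given the single oracle call. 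The key point is rationality: by property~2 of the lemma, $E$ equals $E^{\Pi}_i(\trT_i)$ for \emph{every} transcript Merlin can produce, so a rational Merlin in $\Pi'$ maximizes $\mathbb{E}[E^{\Pi}_i(\trT_i)]$, and a standard backward-induction argument shows that this is exactly the objective a rational Merlin optimizes during the first $i$ rounds of $\Pi$. Hence the transcript arising in $\Pi'$ is one obtainable from a rational interaction in $\Pi$, and property~1 of the lemma then yields $v = f(x)$.

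For the second inclusion $\clFRMA[i]^{(\clFRMA[j])[1]}\subseteq \clFRMA[i]^{||\clDRMA[j]}$, I will use the observation that every $g \in \clFRMA[j]$ has outputs of polynomial length, and for each bit-position $b$ the language ``the $b$-th bit of $g(y)$ equals $1$'' lies in $\clDRMA[j]$: take the $\clFRMA[j]$-protocol for $g$ and replace its value function $\phi$ by its projection onto the $b$-th bit, leaving the reward function and hence Merlin's rationality analysis untouched. The single functional query to $\widetilde{f}$ is therefore replaced by polynomially many decisional queries, one per output bit; since all of these queries share the same argument $(x,\trT_i)$, they can be issued non-adaptively in parallel, and Arthur reassembles $(E,v)$ from the returned bits. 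I expect the only delicate point in the whole argument to be the rationality check in the first inclusion: one must rule out any perverse incentive for Merlin to cheat during the first $i$ rounds in order to influence the oracle's reported expectation, and this is precisely what property~2 of the lemma prevents by making $E$ correct on the transcript that actually occurs, regardless of past play.
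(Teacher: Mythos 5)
Your proposal is correct and follows essentially the same route as the paper: run the first $i$ rounds of $\Pi$ verbatim, make one oracle call to the function $\widetilde{f}$ from Lemma~\ref{thm:new_main_lemma}, pay its reported expectation $E$ as the reward so that $E_t^{\Pi'}=E_t^{\Pi}$ for all $t\le i$ by backward induction (which is exactly where property~2 of the lemma is needed, as you note), and then obtain the second inclusion by bitwise decomposition of the single functional query into parallel decisional queries. The only cosmetic difference is that the paper encodes the bit index into the input of a single language $A_{\widetilde f}$ rather than using one language per bit position, which changes nothing.
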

\begin{proof}
    The case of $i = 0$ or $j = 0$ is trivial.
    Let $\Pi$ be an $\clFRMA[i+j]$-protocol for some function $f \in \clFRMA[i+j]$. Consider a new protocol $\Pi'$, which coincides with $\Pi$ up to Merlin's message in the $i$-th round. Then Arthur computes his $i$-th message in protocol $\Pi$ and sends the current transcript $\trT_i$ to an $\clFRMA[j]$-oracle that corresponds to a function from lemma \ref{thm:new_main_lemma}.
    The value computed by Arthur and the reward are the ones obtained from the oracle.
    
    The rational Merlin will follow the protocol, because
    \begin{gather*}
        E_{i-1}^{\Pi'}(\trT_{i-1}) = \max_{m} \expect_{r} R\big((\trT_{i-1}, m, r)\big)=
        \max_{m} \expect_{r} E_i^{\Pi}\big((\trT_{i-1}, m, r)\big) = E_i^{\Pi}(\trT_i),
    \intertext{from which it follows by downward induction that for all $t < i$}
        E_{t}^{\Pi'}(\trT_{t}) = \max_m \expect_r E_{t+1}^{\Pi'}\big((\trT_{t}, m, r)\big) = \max_m \expect_r E_{t+1}^{\Pi}\big((\trT_{t}, m, r)\big) = E_{t}^{\Pi}(\trT_{t}),
    \end{gather*}
    meaning that the argmaxima, and so the strategies for rational Merlins, coincide for $\Pi$ and $\Pi'$. That means that we can treat the transcript $\trT_i$ from protocol $\Pi'$ as the transcript obtained from interaction with a rational Merlin in protocol $\Pi$, so by lemma \ref{thm:new_main_lemma} the oracle will also return the correct value of $f$.
    
    The second inclusion can be easily obtained by replacing the oracle function $\widetilde{f}$ with the language 
    \[
        A_{\widetilde f} = \left\{(x, i, b) \mid \textsl{$b$ is the $i$-th bit of $\widetilde{f}(x)$}\right\}
    \]
    and calculation of each bit of $\widetilde{f}(x)$ with parallel queries to the $A_{\widetilde f}$.
\end{proof}

\begin{corollary}\label{thm:corol_from_main_2}
    For all $i, j \geq 0$, the following inclusions hold: 
    \[
        \clDRMA[i + j] \subseteq \clDRMA[i]^{(\clFRMA[j])[1]} \subseteq \clDRMA[i]^{||\clDRMA[j]},
    \]
    where {\lq\lq $[1]$\rq\rq} in the second class means that only one oracle query is allowed.
\end{corollary}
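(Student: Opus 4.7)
The plan is to reduce this decisional version directly to Corollary~\ref{thm:corol_from_main_1}, which has already done the hard work. Given $S \in \clDRMA[i+j]$, its indicator $\mathbb{I}_S$ lies in $\clFRMA[i+j]$ by definition, so Corollary~\ref{thm:corol_from_main_1} immediately yields an $\clFRMA[i]^{(\clFRMA[j])[1]}$ protocol $\Pi'$ computing $\mathbb{I}_S$. I would then point out that in the construction of $\Pi'$, Arthur's final value is exactly the second coordinate of the oracle's answer, which by Lemma~\ref{thm:new_main_lemma} equals $\mathbb{I}_S(x)\in\{0,1\}$. So the value function of $\Pi'$ is binary-valued, and the protocol can be read as a $\clDRMA[i]$ protocol with one $\clFRMA[j]$-oracle query, establishing the first inclusion.

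For the second inclusion, I would replay the bit-extraction trick from the proof of Corollary~\ref{thm:corol_from_main_1}. Instead of querying $\widetilde{f}$ directly, Arthur makes parallel queries to the language
\[
    A_{\widetilde{f}} = \{(x,\trT_i,k,b) \mid \text{the $k$-th bit of $\widetilde{f}(x,\trT_i)$ equals $b$}\},
\]
which belongs to $\clDRMA[j]$ since its indicator is in $\clFRMA[j]$ (just run the $\widetilde f$-protocol and output the relevant bit in $\varphi$). The number of parallel queries is polynomial because $\widetilde{f}(x,\trT_i)$, being a pair consisting of a polynomial-precision binary fraction $E^\Pi_i(\trT_i)$ and a single bit $\mathbb{I}_S(x)$, has polynomial length. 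Reassembling the answers and plugging them back into Arthur's simulation gives $S \in \clDRMA[i]^{\|\clDRMA[j]}$.

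I do not expect any real obstacle: the argument is a transparent restriction of the functional corollary to the binary-output case, and the bit-oracle conversion is class-independent. The only point that merits a sentence of verification is that restricting $f$ to $\{0,1\}$-values really does make the resulting oracle protocol a $\clDRMA[i]$ protocol rather than merely an $\clFRMA[i]$ one --- which, as noted above, is immediate from Arthur's output being the second coordinate of the oracle's reply.
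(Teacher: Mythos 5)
Your proposal is correct and follows essentially the same route as the paper: reduce to Corollary~\ref{thm:corol_from_main_1} by viewing the $\clDRMA[i+j]$-protocol as an $\clFRMA[i+j]$-protocol for a binary-valued function, observe that the resulting oracle protocol is again binary-valued (hence a $\clDRMA[i]$-protocol), and handle the second inclusion by the same bit-extraction conversion of the functional oracle into a parallel-query language oracle. No gaps.
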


\begin{proof}
    This trivially follows from corollary \ref{thm:corol_from_main_1} because a $\clDRMA[k]$ protocol is just a special case of an $\clFRMA[k]$ protocol. For the first inclusion, a $\clDRMA[i+j]$-protocol is an $\clFRMA[i+j]$-protocol for a binary-valued function $f$, for which, by corollary~\ref{thm:corol_from_main_1}, there exists a $\clFRMA[i]^{(\clFRMA[j])[1]}$-protocol that, being a protocol for a binary-valued function, constitutes a $\clDRMA[i]^{(\clFRMA[j])[1]}$-protocol for the corresponding language. The second inclusion can be proven similarly.
\end{proof}

Now we will show that upper bound of corollary \ref{thm:corol_from_main_2} is tight at least for $i = 0$. In fact, we prove a somewhat stronger result.
\begin{theorem}\label{thm:main_hierarchy}
    $ \clP^{\clFRMA[k]}=\clDRMA[k]$ for all $k \geq 0$.
\end{theorem}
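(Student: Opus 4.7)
The easy direction $\clDRMA[k] \subseteq \clP^{\clFRMA[k]}$ is immediate: the characteristic function of any $L \in \clDRMA[k]$ is itself in $\clFRMA[k]$, so a polynomial-time machine decides $L$ with a single oracle query.

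For the non-trivial inclusion, I plan to simulate any polynomial-time oracle machine $M$ making $t = \poly(n)$ queries to an $\clFRMA[k]$-oracle $f$ by a single $\clDRMA[k]$-protocol $\Pi'$. Let $\Pi_f$ be an $\clFRMA[k]$-protocol computing $f$. In his first message, Merlin commits to claimed oracle answers $y_1, \ldots, y_t$ alongside the first-round messages of $t$ parallel copies of $\Pi_f$, one for each query $q_j = q_j(x, y_1, \ldots, y_{j-1})$. Arthur computes all $q_j$ from the declared $y$'s, picks a uniformly random $j^* \in \{1, \ldots, t\}$, and completes the remaining $k - 1$ rounds of $\Pi_f$ for the single instance $q_{j^*}$; the decision of $\Pi'$ is simply the output of $M(x, y_1, \ldots, y_t)$, which is polynomial-time computable from the declared answers.

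Designing the reward is the delicate part. Using a naive choice like $R_f(q_{j^*}, \tau_{j^*}) \cdot \mathbb{I}\{\phi_f = y_{j^*}\}$ fails because lying about an early $y_{j_0}$ redirects all subsequent queries $q_{j_0+1}, \ldots, q_t$ to inputs that could have higher $E^{\Pi_f}_0(q)$, and the gain there could outweigh the local penalty at step $j_0$. To neutralize this, I would invoke Lemma~\ref{thm:new_main_lemma} with $i = 0$ and $j = k$ to replace $\Pi_f$ by the enhanced $\clFRMA[k]$-protocol $\tilde\Pi_f$ that also outputs $E^{\Pi_f}_0(q)$, and have Merlin additionally declare values $e_j$ claiming each $E^{\Pi_f}_0(q_j)$. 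The reward combines an indicator $\mathbb{I}\{\tilde\phi_f(q_{j^*}, \tilde\tau_{j^*}) = (e_{j^*}, y_{j^*})\}$ with a normalization in terms of the declared $e_{j^*}$, chosen so that the maximum conditional expected reward per instance becomes a fixed constant, independent of $q_{j^*}$.

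The principal obstacle is to verify that this normalization actually forces full honesty: the declared $e_{j^*}$ is itself under Merlin's control, so the analysis must simultaneously argue rationality of the $e_j$ and $y_j$ declarations, keep the reward in $[0,1]$, and preserve a polynomially bounded $\Delta(\Pi')$. Once the bookkeeping is done, the incentive argument follows the template of Corollary~\ref{thm:corol_from_main_1}: any deviation $(e_j, y_j) \ne (E^{\Pi_f}_0(q_j), f(q_j))$ forces sub-optimal play of $\tilde\Pi_f$ on the instance whose indicator Merlin wants to trigger, strictly decreasing the expected reward by the gap guaranteed by Lemma~\ref{thm:new_main_lemma}; and because the normalization removes the query-dependence of the per-instance maximum, this local strict loss can no longer be offset by redirecting later queries. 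Rational Merlin therefore declares $y_j = f(q_j)$ for every $j$, so $M(x, y_1, \ldots, y_t) = M^f(x)$ and $\Pi'$ correctly decides $L(M^f)$.
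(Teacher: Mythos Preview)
Your plan diverges from the paper at the crucial step, and the gap you label ``bookkeeping'' is in fact the missing idea. The paper does \emph{not} sample a single random $j^*$; it runs all $l$ subprotocols in parallel and pays the geometrically weighted sum $R=\tfrac{1}{2}R_1+\tfrac{\Delta}{4}R_2+\cdots+\tfrac{\Delta^{l-1}}{2^{l}}R_l+\tfrac{\Delta^{l}}{2^{l+1}}$ with $\Delta=\min_j\Delta(\Pi)(x_j)$. A first deviation in the $j$-th subprotocol then costs at least $\tfrac{\Delta^{j-1}}{2^{j}}\cdot\Delta=\tfrac{\Delta^{j}}{2^{j}}$ in expectation, which strictly exceeds the total weight $\sum_{j'>j}\tfrac{\Delta^{j'-1}}{2^{j'}}+\tfrac{\Delta^{l}}{2^{l+1}}$ carried by all later, possibly redirected, subprotocols combined. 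Adaptivity is thus handled directly by the weights; no normalization and no appeal to Lemma~\ref{thm:new_main_lemma} are needed.

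Your uniform-$j^*$ route instead \emph{creates} the query-dependence problem and then tries to cancel it with a term in the declared $e_{j^*}$. That is circular: any correction of the form $g(e_{j^*})$ that flattens the per-instance maximum gives Merlin a first-order (linear in the mis-report) incentive to shift $e_{j^*}$, whereas the Brier-type penalty supplied by Lemma~\ref{thm:new_main_lemma} for a wrong $E$ is only quadratic near the truth, so sufficiently small mis-reports are always profitable. The indicator $\mathbb{I}\{\tilde\phi_f=(e_{j^*},y_{j^*})\}$ does not rescue this either, since Merlin controls $\tilde\phi_f$ and can make it match any declared pair by playing $\tilde\Pi_f$ sub-optimally. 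I do not see how to close this loop without reverting to a weighting that makes earlier instances dominate later ones---which is exactly the paper's device.
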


The idea is, again, to scale down the reward for subproblems that depend on the answers to some other subproblems. With carefully performing proof by induction we can see that a fully rational Merlin has no incentive to lie in any of the subproblems. The complete proof can be found in the Appendix.

\begin{corollary}\label{crl:p-drma}
    $\clP^{\clDRMA[k]} = \clDRMA[k]$, for all $k \geq 0$.
\end{corollary}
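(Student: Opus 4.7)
The plan is to derive this corollary essentially for free from Theorem~\ref{thm:main_hierarchy}, by sandwiching $\clP^{\clDRMA[k]}$ between two classes that are already known to coincide with $\clDRMA[k]$.

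First I would observe the trivial inclusion $\clDRMA[k] \subseteq \clP^{\clDRMA[k]}$: any language $L \in \clDRMA[k]$ is decided by a $\clP$ machine that on input $x$ makes a single oracle query to $L$ and returns the answer. This handles one direction.

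For the other direction, the key observation is that a decision oracle is no more powerful than the corresponding function oracle, so an oracle in $\clDRMA[k]$ can be simulated by an oracle in $\clFRMA[k]$. More precisely, if $L \in \clDRMA[k]$ then by definition $\mathbb{I}_L \in \clFRMA[k]$, and every $\clP$-machine query to $L$ can be replaced by a $\clP$-machine query to the functional oracle $\mathbb{I}_L$ (reading off its single output bit). Hence
\[
    \clP^{\clDRMA[k]} \subseteq \clP^{\clFRMA[k]}.
\]
Combining this with Theorem~\ref{thm:main_hierarchy}, which already establishes $\clP^{\clFRMA[k]} = \clDRMA[k]$, yields $\clP^{\clDRMA[k]} \subseteq \clDRMA[k]$, closing the chain
\[
    \clDRMA[k] \subseteq \clP^{\clDRMA[k]} \subseteq \clP^{\clFRMA[k]} = \clDRMA[k].
\]

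There is no real obstacle here: all the technical work has been absorbed into Theorem~\ref{thm:main_hierarchy} and ultimately into Lemma~\ref{thm:new_main_lemma}, where one has to carefully scale down rewards so that a rational Merlin is not incentivized to lie on any of the (possibly adaptively chosen) oracle subqueries. The corollary itself is a one-line bookkeeping step using the fact that decision oracles are a special case of function oracles.
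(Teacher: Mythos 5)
Your proposal is correct and follows essentially the same route as the paper: both directions come from the trivial inclusion $\clDRMA[k] \subseteq \clP^{\clDRMA[k]}$ and from observing that a $\clDRMA[k]$ oracle is just an $\clFRMA[k]$ oracle for a $\{0,1\}$-valued function, so Theorem~\ref{thm:main_hierarchy} gives $\clP^{\clDRMA[k]} \subseteq \clP^{\clFRMA[k]} = \clDRMA[k]$.
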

\begin{proof}
    Since a $\clDRMA[k]$-protocol is a special case of an $\clFRMA[k]$-protocol, the inclusion $\clP^{\clDRMA[k]} \subseteq \clDRMA[k]$ follows from theorem \ref{thm:main_hierarchy}. 
    The inclusion in the other direction is trivial.
\end{proof}

This corollary shows a plausible difference between the $\clDRMA$ hierarchy and the counting hierarchy. For instance, $\clP^{\clPP}$ is known to contain $\clPH$ due to Toda's theorem, while for $\clPP$ it is unknown, then probably $\clPP\ne\clP^{\clPP}$.

\begin{theorem}\label{thm:pp_with_frma_oracle}
    $\clPP^{\clFRMA[k]} \subseteq \clDRMA[k + 1]$, for all $k \geq 0$.
\end{theorem}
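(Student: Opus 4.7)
The plan is to exploit Theorem~\ref{thm:main_hierarchy} to reduce the inner $\clFRMA[k]$ oracle computation to a $\clDRMA[k]$ subprotocol, and then wrap this inside one extra outer round that implements the standard Brier-score-based rational protocol for $\clPP$.

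Concretely, I would fix a $\clPP^{\clFRMA[k]}$ oracle machine $M$ with random tape $y \in \{0,1\}^{l(|x|)}$, and invoke the equivalent $\clPP$ definition to assume that $\Pr_y[M^{\clFRMA[k]}(x, y) = 1] \ne 1/2$ for every $x$. By Theorem~\ref{thm:main_hierarchy}, the language $L_M = \{(x, y) : M^{\clFRMA[k]}(x, y) = 1\}$ lies in $\clP^{\clFRMA[k]} = \clDRMA[k]$; fix a corresponding $\clDRMA[k]$-protocol $\Pi_M$ with reward $R_M$, decision function $\pi_M$, and gap $\Delta_M = \Delta(\Pi_M)$. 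Then I construct a $(k+1)$-round protocol $\Pi$ as follows: in round~1, Merlin sends a finite binary fraction $p$ (intended as his claim for $\Pr_y[M^{\clFRMA[k]}(x,y) = 1]$) and Arthur samples $y$ uniformly at random; in rounds $2$ through $k+1$, Arthur and Merlin run $\Pi_M$ on input $(x, y)$, producing a decision bit $b$ and reward $R_M$. Arthur accepts iff $p \ge 1/2$, and the overall reward is $R = (1 - \delta) R_M + \delta(1 - (b - p)^2)$ with $\delta = \Delta_M / 2$.

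The correctness splits into two rationality layers. For the inner layer, any deviation of Merlin from rational play inside $\Pi_M$ loses at least $(1 - \delta)\Delta_M$ from the $R_M$ term, while the Brier term $\delta(1 - (b - p)^2)$ can swing by at most $\delta$; the choice $\delta = \Delta_M / 2$ gives $(1 - \delta)\Delta_M > \delta$, so Merlin plays $\Pi_M$ rationally and consequently $b = \mathbb{I}[M^{\clFRMA[k]}(x, y) = 1]$ almost surely. For the outer layer, once $b$ is truthful, the expectation of $R_M$ over $y$ and Arthur's coins does not depend on $p$, while the Brier term is a strictly proper scoring rule whose expectation $\expect_y[1 - (b - p)^2]$ is uniquely maximized at $p = \Pr_y[M^{\clFRMA[k]}(x, y) = 1]$. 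Hence a rational Merlin truthfully reports $p$, and since this probability is never $1/2$, the acceptance rule correctly decides~$L$.

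The step I expect to be the main obstacle is precisely this weight-calibration: choosing $\delta$ so that (a)~the subprotocol's reward $R_M$ still dominates Merlin's incentives throughout $\Pi_M$, ensuring soundness of the inner decision $b$, yet (b)~the Brier term is large enough and strictly proper to force truthful reporting of $p$ in round~1. The explicit estimate $\Delta(\Pi_M) \ge 2^{-\poly(|x|)}$ from the remark following the definition of $\Delta$ makes $\delta$ a finite binary fraction of polynomial precision, computable in polynomial time, so the reward stays in $[0, 1]$ and has the required format.
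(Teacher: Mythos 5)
Your proposal is correct and takes essentially the same route as the paper: invoke Theorem~\ref{thm:main_hierarchy} to get a $\clDRMA[k]$-protocol for the underlying $\clP^{\clFRMA[k]}$ predicate, then prepend one round in which Merlin commits to the answer, weighting that round's incentive by a fraction of $\Delta(\Pi_M)$ so it cannot outweigh rational play inside the subprotocol. The only (cosmetic) difference is the outer scoring rule --- the paper has Merlin send the majority bit and pays $\frac{\Delta}{4}\mathbb{I}\{b=\pi\}$, while you elicit the exact acceptance probability via a Brier score and threshold it at $1/2$; both are sound because that probability never equals $1/2$.
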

\begin{proof}
    If $A$ is a language in $\clPP^{\clFRMA[k]}$, then there exists a language $B \in \clP^{\clFRMA[k]}$ and a polynomial $q$ such that for all $x$:
    \begin{itemize}
        \item $x \in A \Longrightarrow {2^{-q(|x|)}}\left|\left\{ y \in \{0,1\}^{q(|x|)} : (x, y) \in B \right\} \right| > \frac{1}{2}$,
        \item $x \notin A \Longrightarrow {2^{-q(|x|)}}\left|\left\{ y \in \{0,1\}^{q(|x|)} : (x, y) \in B \right\} \right| < \frac{1}{2}$.
    \end{itemize}
    By theorem \ref{thm:main_hierarchy}, there is a $\clDRMA[k]$-protocol $\Pi$ for $B$ with the reward function $R$ and decision function $\pi$. We now describe a $\clDRMA[k+1]$-protocol $\Pi'$ for the language $A$.
    \begin{enumerate}
        \item In the first round Merlin sends one bit $b$ that is supposed to equal $\mathbb{I}\{x \in A\}$.
        \item Then Arthur samples a random string $y \sim \mathcal{U}_{q(|x|)}$ and sends it to Merlin.
        \item The next $k$ rounds correspond to $k$ rounds of protocol $\Pi$ for string $(x, y)$.
        \item Finally, Arthur computes the value $\Delta = \Delta(\Pi)((x, y))$ as well as the reward $R$ and decision bit $\pi$ that correspond to the subprotocol for $(x, y)$, pays Merlin the reward $\frac{1}{2}R + \frac{\Delta}{4}\mathbb{I}\{b = \pi\}$ and accepts $x$ if and only if $b = 1$.  
    \end{enumerate}
    The rational Merlin will not deviate from the subprotocol $\Pi$, since deviating will decrease $\frac{1}{2}R$ by at least $\frac{1}{2}\Delta$ while increasing $\frac{\Delta}{4}\mathbb{I}\{b = \pi\}$ by at most $\frac{\Delta}{4}$. Hence the rational Merlin in $\Pi'$ will be replicating the actions of the rational Merlin in the protocol $\Pi$, in which case $\pi = \mathbb{I}\{(x, y) \in B\}$. So, to maximize the expected value of $\frac{\Delta}{4}\mathbb{I}\{b = \pi\}$, the rational Merlin will send $\mathbb{I}\{x \in A\}$ in the first round.
\end{proof}

Theorem \ref{thm:pp_with_frma_oracle} yields an independent proof of theorem \ref{thm:azar_micali_lower_bound} from \cite{azar2012rational}.
\begin{corollary}
    $C_k\clP \subseteq \clDRMA[k]$ for all $k \geq 0$.
\end{corollary}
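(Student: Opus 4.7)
The approach is induction on $k$, with Theorem~\ref{thm:pp_with_frma_oracle} providing exactly the inductive engine needed to step from level $k$ to level $k+1$ of the counting hierarchy.

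For the base case $k = 0$, we need $C_0\clP = \clP \subseteq \clDRMA[0]$. A $\clDRMA[0]$ protocol involves no Merlin messages at all, so its decision function is just an ordinary polynomial-time computation on $x$; thus $\clDRMA[0] = \clP$ and the inclusion is immediate.

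For the inductive step, assume $C_k\clP \subseteq \clDRMA[k]$, and fix an arbitrary $A \in C_{k+1}\clP = \clPP^{C_k\clP}$. By definition of $\clDRMA[k]$, the characteristic function of every language in $\clDRMA[k]$ lies in $\clFRMA[k]$, so each query to a $C_k\clP$ language oracle can be answered by a single bit-query to the corresponding $\clFRMA[k]$ function oracle. Hence
\[
    C_{k+1}\clP \;=\; \clPP^{C_k\clP} \;\subseteq\; \clPP^{\clDRMA[k]} \;\subseteq\; \clPP^{\clFRMA[k]}.
\]
Applying Theorem~\ref{thm:pp_with_frma_oracle} to the rightmost class then gives $\clPP^{\clFRMA[k]} \subseteq \clDRMA[k+1]$, so $A \in \clDRMA[k+1]$, closing the induction.

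The only step beyond the cited theorem is the substitution of an $\clFRMA[k]$ function oracle for a $\clDRMA[k]$ language oracle, which is essentially definitional and poses no obstacle. The real content is already packaged inside Theorem~\ref{thm:pp_with_frma_oracle}: the scaled combination of the $\clPP$-style majority query and the recursive $\clFRMA[k]$ subprotocol, where the subprotocol's reward is amplified by a factor of $1/2$ relative to the top-level correctness bonus $\Delta/4$, so that a rational Merlin must first answer the subqueries faithfully and only then has the incentive to report the correct majority bit.
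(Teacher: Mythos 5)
Your proof is correct and matches the paper's intended argument: the paper's own proof is just the phrase \lq\lq a straightforward induction by $k$\rq\rq, and the induction it has in mind is exactly yours, with Theorem~\ref{thm:pp_with_frma_oracle} supplying the step $\clPP^{C_k\clP} \subseteq \clPP^{\clFRMA[k]} \subseteq \clDRMA[k+1]$ and the replacement of a $\clDRMA[k]$ language oracle by the corresponding $\clFRMA[k]$ function oracle being the same definitional move the paper uses in Corollary~\ref{crl:p-drma}.
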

\begin{proof}
    A straightforward induction by $k$.
\end{proof}

The next corollary shows    the connection between collapsing of the two hierarchies.

\begin{corollary}\label{thm:nice-corollary}
    If $\clDRMA[k]=\clDRMA[k+1]$, then the counting hierarchy collapses to the $(2k + 1)$-st level.
\end{corollary}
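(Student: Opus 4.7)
The plan is to bootstrap the collapse $\clDRMA[k]=\clDRMA[k+1]$ into closure of $\clDRMA[k]$ under $\clPP$ relativization, then push every level of the counting hierarchy into $\clDRMA[k]$ by induction, and finally sandwich $\clDRMA[k]$ itself between $C_{k}\clP$ and $C_{2k+1}\clP$ using the Azar--Micali bounds already cited.

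First I would invoke Theorem~\ref{thm:pp_with_frma_oracle}, which gives $\clPP^{\clFRMA[k]}\subseteq\clDRMA[k+1]$. Under the assumption that $\clDRMA[k]=\clDRMA[k+1]$, this becomes $\clPP^{\clFRMA[k]}\subseteq\clDRMA[k]$. Using Theorem~\ref{thm:main_hierarchy}, the base class inside the $\clPP$ relativization satisfies $\clP^{\clFRMA[k]}=\clDRMA[k]=\clP^{\clDRMA[k]}$ (the last equality being Corollary~\ref{crl:p-drma}), so $\clPP^{\clDRMA[k]}=\clPP^{\clFRMA[k]}\subseteq\clDRMA[k]$. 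The point is that the hypothesized collapse turns $\clDRMA[k]$ into a class closed under one $\clPP$ operator.

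Next I would perform a straightforward induction on $m$ to show $C_m\clP\subseteq\clDRMA[k]$ for every $m\geq 0$. The base case $C_0\clP=\clP\subseteq\clDRMA[k]$ is immediate. For the inductive step, by definition $C_{m+1}\clP=\clPP^{C_m\clP}$, and since $C_m\clP\subseteq\clDRMA[k]$ by the inductive hypothesis, we get $C_{m+1}\clP\subseteq\clPP^{\clDRMA[k]}\subseteq\clDRMA[k]$ by the previous paragraph. Hence $\classfont{CH}=\bigcup_{m\geq 0} C_m\clP\subseteq\clDRMA[k]$. Combined with the upper bound from Theorem~\ref{thm:azar_micali_upper_bound}, namely $\clDRMA[k]\subseteq C_{2k+1}\clP$, this yields $\classfont{CH}\subseteq C_{2k+1}\clP$, which is exactly the claimed collapse.

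The only delicate point is ensuring that the relativization $\clPP^{\clFRMA[k]}$ really agrees with $\clPP^{\clDRMA[k]}$ as oracle classes, so that the inductive step closes cleanly; this is handled by noting that the inner $\clP$-machine in the definition of $\clPP^{\classfont{C}}$ can query a $\clDRMA[k]$ decision oracle iff it can query a $\clFRMA[k]$ function oracle (one direction is trivial, and the other uses Corollary~\ref{crl:p-drma} together with Theorem~\ref{thm:main_hierarchy} to bit-extract function values via $\clDRMA[k]$ queries). Once this small technical check is in place, the whole argument is a two-line induction, and no new machinery about rational protocols is needed beyond the lemmas already proved.
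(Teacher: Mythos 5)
Your proposal is correct and follows essentially the same route as the paper: both use Theorem~\ref{thm:pp_with_frma_oracle} together with the hypothesis to get $\clPP^{\clDRMA[k]}\subseteq\clDRMA[k+1]=\clDRMA[k]$, run an induction pushing every level of the counting hierarchy into $\clDRMA[k]$, and then apply the upper bound $\clDRMA[k]\subseteq C_{2k+1}\clP$ from Theorem~\ref{thm:azar_micali_upper_bound}. The only cosmetic differences are that the paper starts its induction at $C_k\clP$ via Theorem~\ref{thm:azar_micali_lower_bound} and stops at $C_{2k+2}\clP=C_{2k+1}\clP$ to conclude the collapse, whereas you start at $C_0\clP$ and place all of $\classfont{CH}$ inside $\clDRMA[k]$ directly; both are fine, and your identification of $\clPP^{\clFRMA[k]}$ with $\clPP^{\clDRMA[k]}$ (only the easy inclusion of which is actually needed) is handled correctly.
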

\begin{proof}
    We will prove by induction on $l \geq 0$ that $C_{k+l}\clP \subseteq \clDRMA[k]$. The base  case is provided by theorem \ref{thm:azar_micali_lower_bound}. Induction step is as follows:
    \[
        C_{k + (l+1)}\clP = \clPP^{C_{k + l}\clP} \subseteq \clPP^{\clDRMA[k]} \subseteq \clDRMA[k+1]=\clDRMA[k],
    \]
    where the last equality follows from the assumption of the corollary. 
    By theorem \ref{thm:azar_micali_upper_bound} we have $\clDRMA[k] \subseteq C_{2k + 1}\clP$, hence 
    \[
        C_{2k + 2}\clP = C_{k + (k + 2)}\clP \subseteq \clDRMA[k] \subseteq C_{2k+1}\clP,
    \]
    so $C_{2k+1}\clP = C_{2k+2}\clP$ and $\mathbf{CH} = C_{2k+1}\clP$.
\end{proof}

Note that we do not know whether $\clDRMA[k]=\clDRMA[k+1]$ implies $\clDRMA[k+1]=\clDRMA[k+2]$, but we still obtain from this corollary that $\clDRMA[k]=\clDRMA[k+1]$ implies the collapse of the $\clDRMA$ hierarchy at some higher level.

\section{Communication complexity of single-round proofs}\label{sect:comm}

In this section we study classes of the form $\clDRMA[1, c]$, that is, classes of languages that are recognized by $\clDRMA[1]$ protocols with at most $c$ bits sent by Merlin. We also assume that Merlin uses the binary alphabet.

Firstly, let us reduce to the case of binary rewards.

\begin{lemma}\label{thm:one_bit_reward}
    Every $\clDRMA[1, c]$-protocol is equivalent to a $\clDRMA[1, c]$-protocol with the reward function taking values in $\{0, 1\}$.
\end{lemma}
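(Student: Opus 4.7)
The plan is to replace a fractional reward by a Bernoulli random variable with the same expectation, leaving Merlin's incentives, the communication bound, and Arthur's decision untouched.

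First, I would invoke the observation already made in the preliminaries: since $R$ is polynomial-time computable and takes values that are finite binary fractions, there is a polynomial $p$ such that for every input $x$ of length $n$, every Merlin message $m$ of length at most $c$, and every random string $r$, the value $R(x;m,r)$ can be written as $k(x;m,r)/2^{p(n)}$ for some nonnegative integer $k(x;m,r) < 2^{p(n)}$ (in fact one can take $p(n) = T_R(n)$, the time bound of $R$). The integer $k(x;m,r)$ is polynomial-time computable from its definition.

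Next, I would modify Arthur as follows. He samples the original random string $r$ of length $l(n)$, and additionally samples a fresh independent string $r' \in \{0,1\}^{p(n)}$ interpreted as an integer $j \in \{0, \ldots, 2^{p(n)}-1\}$. He still expects a message $m$ of at most $c$ bits from Merlin, and his decision function $\phi$ is unchanged. The new reward is
\[
    R'(x; m, (r,r')) = \mathbb{I}\{\, j < k(x;m,r)\,\},
\]
which takes only values $0$ and $1$ and is polynomial-time computable. A direct computation gives
\[
    \expect_{r, r'} R'(x;m,(r,r')) \;=\; \expect_r \frac{k(x;m,r)}{2^{p(n)}} \;=\; \expect_r R(x;m,r).
\]

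Since the expected reward for every candidate message $m$ is exactly the same as in the original protocol, the set $\operatorname{Argmax}_m \expect R'$ equals $\operatorname{Argmax}_m \expect R$, so a rational Merlin in the new protocol sends exactly the same optimal messages. Because $\phi$ is unchanged, the computed value $\phi(x;m^*)$ coincides with that of the original protocol, and the new protocol decides the same language. Finally, Merlin's communication is unaffected (he still sends at most $c$ bits; the extra randomness $r'$ is local to Arthur), so the new protocol is in $\clDRMA[1,c]$. There is no real obstacle here — the only thing to double-check is that the extra random bits do not count toward the communication bound, which is immediate from the definition of $\clDRMA[1,c]$ as bounding Merlin's message length only.
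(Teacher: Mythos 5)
Your proof is correct and follows essentially the same approach as the paper: replace the reward by a $\{0,1\}$-valued random variable with the identical expectation, so the set of optimal messages (and hence the rational Merlin's behavior and the decided language) is unchanged. The only difference is the implementation of the Bernoulli coin --- you compare $k(x;m,r)$ against a fresh uniform integer, while the paper samples a bit position $i$ with probability $2^{-i}$ and outputs the $i$-th bit of $R$; both yield expectation exactly $R(x;m,r)$, and your variant has the minor advantage of using a fixed polynomial number of extra random bits.
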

\begin{proof}
    Arthur should, after computing the reward $R$ in the given protocol, send as the reward not the value $R$, but the $i$-th bit of $R$ with probability $2^{-i}$ (which can be done in finite time since for all sufficiently large $i$ the $i$-th bit of $R$ would be $0$), this way Arthur would essentially send $1$ with probability $R$.
\end{proof}

Note that this transformation can be applied to any $\clDRMA$-protocol without increasing its round or communication complexity.

\begin{lemma} \label{lemma:compare_expects}
    For any two circuits $C_0, C_1$ which compute functions from $\{0,1\}^n$ to $\{0,1\}$, the problem of deciding whether or not $\expect_{x\sim {\cal U}_n} C_0(x) \leq \expect_{x\sim{\cal U}_n} C_1(x)$ is in $\clPP$. In other words, the language
    \begin{multline*}
        \mathsf{Compare-Expectations} = \Big\{(C_0, C_1)\mid C_i: \{0,1\}^n \to \{0,1\},\\ \expect_{r\sim {\cal U}_n} C_0(r) \leq \expect_{r\sim {\cal U}_n} C_1(r) \Big\}
    \end{multline*}
    is in $\clPP$.
\end{lemma}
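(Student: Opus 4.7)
The plan is to exhibit a simple probabilistic polynomial-time machine whose acceptance probability is at least $1/2$ exactly when the inequality $\expect_r C_0(r)\leq \expect_r C_1(r)$ holds, and then invoke the definition of $\clPP$ directly.

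Concretely, I would describe the following polynomial-time verifier $M$ on input $(C_0,C_1)$ that uses $n+1$ random bits, interpreted as a single bit $b\in\{0,1\}$ together with a string $r\in\{0,1\}^n$. The machine accepts iff either $b=0$ and $C_0(r)=0$, or $b=1$ and $C_1(r)=1$. Circuit evaluation is polynomial-time, so this is a legitimate $\clP$-predicate on the extended input $((C_0,C_1),(b,r))$.

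Next I would compute the acceptance probability. Conditioning on $b$ gives
\[
\Pr_{b,r}[M \text{ accepts}] \;=\; \tfrac{1}{2}\Pr_r[C_0(r)=0] + \tfrac{1}{2}\Pr_r[C_1(r)=1] \;=\; \tfrac{1}{2}\bigl(1-\expect_r C_0(r) + \expect_r C_1(r)\bigr).
\]
This quantity is at least $1/2$ if and only if $\expect_r C_1(r) \geq \expect_r C_0(r)$, which is precisely the defining condition of $\mathsf{Compare\text{-}Expectations}$. Plugging $M$ and $q(|x|)=n+1$ into the definition of $\clPP$ immediately places the language in $\clPP$.

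There is no real obstacle here; the only point that deserves a sentence is the boundary case when the two expectations are exactly equal, giving acceptance probability exactly $1/2$. This is aligned with the non-strict inequality in the definition of $\mathsf{Compare\text{-}Expectations}$ and with the $\geq 1/2$ threshold in the primary definition of $\clPP$ stated in Section~\ref{sect:prelim}, so no perturbation trick is needed; had the problem required strict comparison, one would instead use the equivalent formulation mentioned in the preliminaries where the acceptance fraction never lands on $1/2$.
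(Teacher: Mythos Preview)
Your proof is correct and follows the same overall strategy as the paper: construct a polynomial-time probabilistic procedure whose acceptance probability works out to exactly $\tfrac{1}{2}+\tfrac{1}{2}\bigl(\expect_r C_1(r)-\expect_r C_0(r)\bigr)$, and then invoke the $\geq \tfrac{1}{2}$ threshold in the definition of $\clPP$. The paper uses a slightly different sampler---it evaluates $C_0$ and $C_1$ on two \emph{independent} uniform inputs, accepts when $C_1$'s bit strictly exceeds $C_0$'s, and flips a fair coin on ties---but a short calculation shows its acceptance probability is the very same expression, so the two arguments are interchangeable; your single-evaluation variant is a touch more economical in randomness. One cosmetic point: writing ``$q(|x|)=n+1$'' is not literally a polynomial in the input length, since $n$ depends on the instance and not just on $|(C_0,C_1)|$; the standard fix (pad the witness to a fixed polynomial length and ignore unused bits) is routine, and the paper's proof elides the same detail.
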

Circuits are used here as a convenient way to describe parameterized functions. in our applications a polynomial-time computable function with a fixed parameter is represented by a polynomial-size circuit.
The idea of the proof is to sample both circuits and favor the one with greater value while breaking ties randomly. The proof is presented in the Appendix.

The following lemma is implicit in~\cite{azar2012rational}, but we prove it for completeness.
\begin{lemma}\label{thm:PP_in_DRMA_1_1}
    $\clPP \subseteq \clDRMA[1, 1]$.
\end{lemma}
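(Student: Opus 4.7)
The plan is to give a direct one-bit protocol that scores Merlin's claim against a single random sample. Let $B \in \clP$ and a polynomial $q$ witness $A \in \clPP$ in the strict form mentioned right after the definition in Section~\ref{sect:prelim}: setting $p(x) = \Pr_{y\sim\mathcal{U}_{q(|x|)}}[(x,y) \in B]$, we have $p(x) > 1/2$ if $x \in A$ and $p(x) < 1/2$ if $x \notin A$, and in particular $p(x) \neq 1/2$ for every $x$. This strict separation is what will force the rational Merlin to tell the truth.

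The protocol $\Pi$ is the following. First, Merlin sends a single bit $b \in \{0,1\}$, intended to equal $\mathbb{I}\{x\in A\}$. Arthur then samples $y$ uniformly from $\{0,1\}^{q(|x|)}$, computes $t = \mathbb{I}\{(x,y)\in B\}$ (possible in polynomial time since $B\in\clP$), and pays the reward
\[
R(x; b, y) = \mathbb{I}\{b = t\}.
\]
The decision function is $\pi(x; b) = b$, so Arthur accepts iff $b = 1$.

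To verify rationality, compute $\expect_y R(x;b,y) = p(x)$ when $b = 1$ and $\expect_y R(x;b,y) = 1 - p(x)$ when $b = 0$. Because $p(x) \neq 1/2$, exactly one of these is strictly larger, and it is $p(x)$ precisely when $p(x) > 1/2$, i.e., precisely when $x \in A$. Hence the unique rational choice is $b = \mathbb{I}\{x\in A\}$, and Arthur's acceptance criterion $b=1$ agrees with membership in $A$.

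There is essentially no obstacle beyond invoking the tie-free version of the $\clPP$ definition; the reward already takes values in $\{0,1\}$, so Lemma~\ref{thm:one_bit_reward} is not even needed here. Merlin sends exactly one bit, so the protocol is in $\clDRMA[1,1]$, proving the inclusion.
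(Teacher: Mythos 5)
Your proof is correct and follows the same route as the paper: Merlin sends one bit, Arthur pays the indicator that this bit matches a single random run of the $\clPP$ verifier, and the tie-free form of the $\clPP$ definition guarantees a unique rational message. Your write-up just spells out the expected-reward computation slightly more explicitly; there is nothing to add.
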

\begin{proof}
    It is well-known that for any language $A \in \clPP$ there is a polynomial-time algorithm $V$ such that $\Pr_y [V(x, y) = 1]$ is strictly greater than $\frac{1}{2}$ for $x \in A$ and strictly less than $\frac{1}{2}$ for $x \notin A$.
    
    Consider the following protocol.
    \begin{enumerate}
        \item Merlin sends one bit $b$ that is supposed to be $\mathbb{I}\{x\in A\}$,
        \item Arthur runs $V(x, y)$ on random $y$ and pays Merlin $\mathbb{I}\{b = V(x, y)\}$, then accepts $x$ if and only if $b=1$.
    \end{enumerate}
    Since there are no ties, a rational Merlin would send the correct bit $b$.
\end{proof}

Finally, to prove the main result of this section, we use the following theorem by Fortnow and Reingold.

\begin{theorem}[\cite{fortnow1996pp}]\label{thm:Fortnow_PP_reductions}
    If $A \leq_{tt} B$ and $B \in \clPP$, then $A \in \clPP$.
\end{theorem}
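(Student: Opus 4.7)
I would reduce to the well-known characterization of $\clPP$ via GapP functions: $B\in\clPP$ iff there is a GapP function $g_B$ (the integer-valued difference of two $\clSharpP$ functions) with $q\in B \iff g_B(q)>0$. The engine of the argument is GapP's closure under sums, differences, and products of a polynomial number of GapP factors.

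Given the truth-table reduction, on input $x$ I would compute in polynomial time the queries $q_1,\ldots,q_k$ and the truth table $T\colon\{0,1\}^k\to\{0,1\}$. Expand $T$ as a multilinear polynomial $T(b_1,\ldots,b_k)=\sum_{S\subseteq[k]}\alpha_S\prod_{i\in S}b_i$ with integer coefficients of magnitude at most $2^{O(k)}$. The goal is then to construct a single GapP function $g_A$ with $g_A(x)>0 \iff x\in A$, which by the characterization above puts $A$ in $\clPP$.

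The crux is a sign-amplification step. A naive substitution $b_i\mapsto g_B(q_i)$ fails because the $g_B(q_i)$ have wildly varying magnitudes, and the integer coefficients $\alpha_S$ cannot reliably control the sign of the resulting multilinear sum. Instead, I would replace each $b_i$ by a polynomial proxy $p(g_B(q_i))$ obtained from a low-degree integer polynomial $p$ that approximates the sign function sharply on the two regions $[-N,-1]$ and $[1,N]$. Equivalently, one constructs GapP functions $h_i$ whose values are concentrated at $\pm M$ according to the sign of $g_B(q_i)$, for a single exponentially large $M$, and sets $g_A(x)=\sum_S\alpha_S\prod_{i\in S}h_i(q_i)$. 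GapP closure then places $g_A$ in GapP, while the uniform large magnitudes of the $h_i$ force the sign of $g_A$ to agree with $T(\chi_B(q_1),\ldots,\chi_B(q_k))$.

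The main obstacle is producing the $h_i$ inside GapP. Naive averaging or majority-voting strategies do not stay in GapP, and one really does need the rational-approximation machinery for the sign function introduced by Beigel--Reingold--Spielman, which gives an integer polynomial of polynomial degree whose value on any integer in $[-N,N]\setminus\{0\}$ is within a controlled range and whose sign matches that of its input. Ensuring that the amplification is uniform across all queries, that all coefficients remain polynomially bounded in description, and that the final summation does not cancel the dominant sign contribution, is the delicate bookkeeping that Fortnow and Reingold work out in full.
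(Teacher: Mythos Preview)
The paper does not prove this theorem at all; it is quoted as a black-box result from Fortnow and Reingold~\cite{fortnow1996pp} and invoked in the proof of Theorem~\ref{thm:main_comm_thm}. So there is no ``paper's own proof'' to compare against beyond the citation itself.

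That said, your sketch is a faithful outline of the Fortnow--Reingold argument: the GapP characterization of $\clPP$, polynomial-time generation of the query list and truth table, multilinear expansion of the evaluator, and---most importantly---the sign-amplification step using the Beigel--Reingold--Spielman approximation of $\operatorname{sign}$, together with GapP's closure under polynomial sums and products. One terminological wobble: the BRS tool is a low-degree \emph{rational} function $P/Q$ with $Q>0$ that approximates $\operatorname{sign}$ on $[-N,-1]\cup[1,N]$; a single integer polynomial cannot do this uniformly. What one actually uses is that $P(g_B(q_i))$ carries the correct sign while the positive denominators $Q(g_B(q_i))$ are folded into a common positive scaling factor when the terms are combined. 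With that caveat, your plan matches the cited proof.
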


We proceed by the main result of this section.

\begin{theorem}\label{thm:main_comm_thm}
    $\clDRMA[1, O(\log n)] \subseteq \clPP$.
\end{theorem}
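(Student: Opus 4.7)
The plan is to reduce the decision problem to a polynomial-size truth-table computation on the $\clPP$-complete oracle $\mathsf{Compare\text{-}Expectations}$, and then apply the Fortnow--Reingold theorem.

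First I would invoke Lemma~\ref{thm:one_bit_reward} to assume without loss of generality that the protocol has a binary-valued reward function $R(x;m,r)$ with reward function and decision function $\pi$ both polynomial-time computable. Since $c=c(n)=O(\log n)$, the set of possible Merlin messages has size $M=2^{c(n)}=\poly(n)$. I would enumerate all of them as $m_1,\dots,m_M$. For each $m_i$, fixing $x$ turns $R(x;m_i,\cdot)$ and $\pi(x;m_i)$ into a polynomial-size Boolean circuit $C_i$ on $r$ and a single computable bit $b_i$, respectively.

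Next I would restate acceptance purely in terms of expectations: by the definition of $\clDRMA[1]$, we have $x\in A$ iff $\pi(x;m^*)=1$ for every $m^*\in\operatorname{Argmax}_m \expect_r R(x;m,r)$; since all argmax messages must produce the same $\pi$-value for the protocol to compute a well-defined characteristic function, this is equivalent to
\[
    x\in A \iff \exists\, i \in \{1,\dots,M\}\ :\ b_i=1 \ \text{and}\ \forall\, j\ \expect_r C_j(r)\leq \expect_r C_i(r).
\]
Each comparison $\expect_r C_j(r)\leq \expect_r C_i(r)$ is a query to the language $\mathsf{Compare\text{-}Expectations}$, which is in $\clPP$ by Lemma~\ref{lemma:compare_expects}. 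Since $M=\poly(n)$, the total number of queries is $M^2=\poly(n)$, and the combining function (a polynomial-size disjunction of conjunctions together with the precomputable bits $b_i$) is polynomial-time computable. Therefore $A$ is polynomial-time truth-table reducible to a $\clPP$ language.

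By the Fortnow--Reingold theorem (Theorem~\ref{thm:Fortnow_PP_reductions}), $\clPP$ is closed under truth-table reductions, so $A\in\clPP$. I expect the only delicate point to be confirming the equivalence in the displayed line above, specifically that the "for all $m^*$ in the argmax" formulation collapses to an existential one; this follows from the fact that a $\clDRMA[1]$ protocol computes a well-defined function only if every argmax message yields the same $\pi$-value, so a single witness in the argmax with $b_i=1$ both forces and certifies $x\in A$. Everything else is bookkeeping: checking that the circuits $C_i$ are of polynomial size and that enumerating $M=\poly(n)$ messages keeps the reduction polynomially bounded.
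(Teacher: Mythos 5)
Your proof is correct and follows essentially the same route as the paper: enumerate the $\poly(n)$ possible messages, compare expected rewards via the $\mathsf{Compare\text{-}Expectations}$ oracle of Lemma~\ref{lemma:compare_expects}, and conclude by the Fortnow--Reingold closure of $\clPP$ under truth-table reductions. Your all-pairs, non-adaptive formulation (and the explicit handling of ties in the argmax) is if anything slightly cleaner than the paper's knockout tournament, since it makes the non-adaptivity of the reduction immediate.
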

\begin{proof}
    Let $A$ be a language in $\clDRMA[1, O(\log n)]$ and $\Pi$ be the corresponding protocol for $A$ with the reward function $R$ and the decision function $\pi$. Since Merlin sends only a message of length $O(\log n)$, there can only be $p(n) = 2^{O(\log n)} = \poly(n)$ possible messages. 
    
    Consider an algorithm that, on input $x$, looks through all possible Merlin's messages and finds the message $m^*$ that maximizes the value $\expect_r R(x; m, r)$. This winner can be found in a knockout tournament where messages $m_0$ and $m_1$ are compared by feeding the pair $(R(m_0, \cdot), R(m_1, \cdot))$ to the $\mathsf{Compare-Expectations}$ oracle. Finally, $x$ is accepted iff $\pi(m^*) = 1$. Clearly, this algorithm provides a polynomial-time truth-table reduction from $A$ to $\mathsf{Compare-Expectations} \in \clPP$. Hence, by theorem \ref{thm:Fortnow_PP_reductions}, we have $A \in \clPP$.
\end{proof}

\begin{corollary}\label{crl:pp_eq_drma_log}
    $\clPP = \clDRMA[1, O(\log n)]$.
\end{corollary}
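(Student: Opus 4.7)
The plan is to observe that both inclusions have already been established, so the corollary amounts to assembling them. For the inclusion $\clDRMA[1, O(\log n)] \subseteq \clPP$ I would simply invoke Theorem~\ref{thm:main_comm_thm}, which was proved just above by walking through a polynomial-size knockout tournament over Merlin's $\poly(n)$ possible messages and appealing to Fortnow--Reingold closure of $\clPP$ under truth-table reductions.

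For the reverse inclusion $\clPP \subseteq \clDRMA[1, O(\log n)]$, I would appeal to Lemma~\ref{thm:PP_in_DRMA_1_1}, which gives the stronger containment $\clPP \subseteq \clDRMA[1, 1]$, and then note the trivial monotonicity $\clDRMA[1, 1] \subseteq \clDRMA[1, O(\log n)]$, since a protocol in which Merlin sends a single bit is in particular one in which he sends $O(\log n)$ bits.

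Since each half of the equality is a direct citation, there is really no obstacle at all in this corollary; the genuine content lies entirely in Theorem~\ref{thm:main_comm_thm} (and behind it, Lemma~\ref{lemma:compare_expects} and the Fortnow--Reingold theorem). I would therefore keep the write-up to a single short sentence combining the two containments.
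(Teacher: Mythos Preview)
Your proposal is correct and matches the paper's own proof essentially verbatim: the paper also derives the corollary in one line by combining Lemma~\ref{thm:PP_in_DRMA_1_1} (giving $\clPP \subseteq \clDRMA[1,1] \subseteq \clDRMA[1,O(\log n)]$) with Theorem~\ref{thm:main_comm_thm} (giving the reverse inclusion). There is nothing to add or change.
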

\begin{proof}
    Immediately follows from lemma \ref{thm:PP_in_DRMA_1_1} and theorem \ref{thm:main_comm_thm}.
\end{proof}

Let us now discuss lower bounds on communication complexity of $\clParityP$ for $\clDRMA[1]$-protocol. Corollary \ref{crl:pp_eq_drma_log} tells that $\clParityP \not\subseteq \clDRMA[1, O(\log n)]$ would imply $\clParityP \neq \clPP$, while it holds that $\clParityP \subseteq \clDRMA[1, \poly(n)]$ (Theorem~\ref{thm:azar-micali-one-round}), so it is interesting to study the communication complexity of $\clDRMA[1]$-protocols for problems in $\clParityP$. In all presented proofs so far we have treated certificate verification algorithms that define inclusion in counting classes as samplers from Bernoulli distributions. We now show that this approach cannot yield $\clDRMA[1]$-protocols with sublinear communication complexity.

\begin{theorem}\label{thm:parity_p_lower_bound}
    There is no polynomial-time computable functions $R: \{0,1\}^* \to [0,1]$ and $\phi: \{0,1\}^* \to \{0,1\}$ such that for any $n$ and $0 \leq k \leq 2^n$ the value of $\phi(1^n, m^*)$ is equal to the parity of $k$ where $m^*\in\operatorname{Argmax}_m \mathbb{E}_{a, r} R(1^n, m, a, r)$,
    $|m^*| < \alpha n$ for some constant $\alpha \in (0, 1)$, $a \sim {\cal U}_{s}$ with $s = \poly(n)$ and $r \in \{0,1\}^{d}$ with $d = \poly(n)$ and bits of $r$ being independently sampled from $\operatorname{Bern}\left(\frac{k}{2^n}\right)$.
\end{theorem}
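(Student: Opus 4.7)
The plan is to assume such $R$, $\phi$, and a constant $\alpha \in (0,1)$ exist and derive a contradiction by treating expected rewards as univariate polynomials in $p := k/2^n$. For each Merlin message $m$ the expected reward
\[
    F_m(p) := \expect_{a, r} R(1^n, m, a, r), \qquad r \sim \operatorname{Bern}(p)^d,\ a \sim {\cal U}_s,
\]
is a polynomial in $p$ of degree at most $d = \poly(n)$, obtained by expanding the expectation over $r$ as a $[0,1]$-weighted sum of monomials $p^{|r|}(1-p)^{d-|r|}$. Because $|m^*| < \alpha n$ there are at most $M < 2^{\alpha n}$ such polynomials; grouping them by the label $\phi(1^n, m) \in \{0,1\}$ yields two upper envelopes $G_j(p) := \max_{\phi(m)=j} F_m(p)$.

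Next I would argue that correctness forces $H(p) := G_0(p) - G_1(p)$ to oscillate rapidly. If $G_0(p_k) = G_1(p_k)$ at $p_k := k/2^n$, then the rational Merlin may pick an argmaximizer of either label, and at least one such choice violates $\phi(m^*) = \operatorname{parity}(k)$; hence $\operatorname{sign} H(p_k) = (-1)^k$ for every $k$, so $H$ must cross zero inside each of the $2^n$ intervals $(p_k, p_{k+1})$, giving at least $2^n$ zeros of $H$ in $[0,1]$.

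To bound the number of zeros of $H$ from above I would decompose $[0,1]$ into maximal subintervals on which the envelope-realizers $m_0$ of $G_0$ and $m_1$ of $G_1$ are both constant. On each such subinterval $H = F_{m_0} - F_{m_1}$ is a polynomial of degree $\le d$ with at most $d$ real roots, and the number of subintervals is at most $L_0 + L_1$, where $L_j$ is the combinatorial complexity of $G_j$. Invoking the Davenport--Schinzel bound $L_j \le \lambda_{d+1}(N_j)$, which becomes near-linear in $N_j$ once the order $s = d+1 = \poly(n)$ is large compared to the inverse-Ackermann value (always $O(1)$ in our setting), one obtains $L_0 + L_1 = O(M)$. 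Therefore $2^n \le O(M)\cdot d = 2^{\alpha n}\cdot\poly(n)$, contradicting $\alpha < 1$ for $n$ sufficiently large.

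The main obstacle is precisely this last envelope-complexity step: the naive pairwise-intersection count gives only $L_j = O(N_j^2 d)$ and hence recovers only the weaker bound $\alpha \ge 1/2 - o(1)$. Reaching the full range $\alpha \in (0,1)$ really needs the near-linear Davenport--Schinzel estimate, which does apply here because $d$ is polynomial in $n$ while the inverse-Ackermann function of $M$ is bounded by a small constant.
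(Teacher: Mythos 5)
Your overall strategy --- turn the expected reward for each message $m$ into a univariate polynomial $F_m(p)$ of degree at most $d=\poly(n)$ in $p=k/2^n$, observe that correctness forces an exponential number of sign alternations, and contradict the polynomial degree --- is exactly the engine of the paper's proof, and the sign-alternation and piecewise-decomposition parts of your write-up are sound (modulo counting sign changes of the sequence $\bigl(H(p_k)\bigr)_k$ rather than ``zeros,'' to handle pieces where $H$ vanishes identically). The paper, however, does not attempt envelopes at all: it applies the pigeonhole principle twice to extract two \emph{fixed} messages $m_0,m_1$ with opposite labels, each optimal at exponentially many interlaced values of $k$, so that the single polynomial $F_{m_0}-F_{m_1}$ already alternates sign $\Omega(2^{(1-2\alpha)n})$ times; this is elementary but, as you correctly diagnose, inherently limited to $\alpha<1/2$ (and indeed the paper's proof opens by assuming $\alpha<1/2$).

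The step where you go beyond this is the one that does not hold up. You claim $L_j\le\lambda_{d+1}(N_j)=O(N_j)$ because the Davenport--Schinzel bound ``becomes near-linear once the order $s=d+1$ is large compared to the inverse-Ackermann value.'' This is backwards: $\lambda_s(n)$ is nondecreasing in $s$, and the block construction $a_1b_1a_1b_1\cdots\,a_2b_2a_2b_2\cdots$ shows $\lambda_s(n)\ge(s+1)\lfloor n/2\rfloor$, so $\lambda_{d+1}(N)=\Omega(dN)$ and is certainly not $O(N)$. More seriously, the near-linear asymptotics $\lambda_{2t+2}(n)=n\cdot 2^{(1+o(1))\alpha(n)^{t}/t!}$ are proved for \emph{fixed} order as $n\to\infty$, with hidden dependence on the order; they cannot hold uniformly in $t$ (plugging in large $t$ would contradict the $\Omega(sn)$ lower bound), and your regime --- order $d=\poly(n)=\operatorname{polylog}(N)$ growing with the number $N\le 2^{\alpha n}$ of curves --- is precisely where the order-dependence matters. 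What your argument actually needs is an explicit bound of the form $\lambda_d(N)\le N\cdot 2^{o(n)}$ for $d=\poly(n)$, e.g.\ $N\cdot\poly(d)\cdot g(\alpha(N))$; you neither state such a bound correctly nor prove it, and without it the only available estimate is the trivial $O(N^2d)$ pairwise-intersection count, which returns you to $\alpha<1/2$. So as written the proposal establishes the theorem only for $\alpha<1/2$ (matching the paper), and the claimed extension to all $\alpha\in(0,1)$ has a genuine gap at the envelope-complexity step.
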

The idea is that, for fixed $n, m$, the reward function is a polynomial in probability of nonuniform bits being equal to $1$. If $m$ is short, then some two of these polynomials of polynomial degree have to intersect in exponentially many points. The details are presented in the Appendix.

\section{Further research}\label{sect:concl}

In this paper, we show several interesting connections between the counting hierarchy and the $\mathbf{DRMA}$ hierarchy. But the main question is open: do they coincide or are they different? In case they are different a natural next step would be to construct an example of a language that lies in $\mathbf{DRMA}[k]$ but seems to not lie in $C_k\mathbf{P}$.

\bibliographystyle{splncs04}
\bibliography{drma}

\appendix
\section{Proof of lemma \ref{thm:new_main_lemma}}
\begin{proof}
    Denote by $R$ the reward function in $\Pi$ and by $\phi$ the value function there. The case of $i = 0$ or $j = 0$ is trivial, so we assume $i, j \geq 1$.
    Consider the following protocol $\widetilde{\Pi}$: in the first round,
    given the input $x$ and a partial transcript~$\trT$,
    Merlin sends ${E = E^{\Pi}_i(\trT)}$ together with $(i+1)$-st message that Merlin would send in protocol $\Pi$, and all other messages are exactly the messages sent by a rational Merlin
    in $\Pi$. In the end Arthur pays Merlin 
    \[
        \widetilde{R} = \frac{1}{2}R(\trT) + \frac{1}{4}\Delta(\Pi)(x)\Big( 1 - \big( E - R(\trT) \big)^2\Big),
    \]
    or he pays $0$ if Merlin's message format does not correspond to the format prescribed by the protocol (e.g. if the value $E$ is not in the range $[0, 1]$). Note that $\widetilde R$ is polynomial-time computable since $R$ and $\Delta(\Pi)(\cdot)$ are polynomial-time computable by definition (we can use a function of the form $2^{-\poly(|x|)} for \Delta(\Pi)$). 
    Since the reward is a value in $[0, 1]$, the value of $(E - R(\trT))^2$ also lies in $[0,1]$, which means that, if Merlin ever deviates from the {\lq\lq continuation\rq\rq} of protocol $\Pi$ at some step, then the expectation of $\frac{1}{2}R(\trT)$ will decrease at least by $\frac{1}{2}\Delta(\Pi)(x)$, while the expected value of $\frac{1}{4}\Delta(\Pi)(x)\Big( 1 - \big( E - R(\trT) \big)^2\Big)$ will not increase by more than $\frac{1}{4}\Delta(\Pi)(x)$. That means that the rational Merlin may only deviate from the protocol by sending an incorrect $E$ in the first round. 
    
    But then he will need to choose $E$ to minimize the functional
    \[
        \expect \Big[(E - R(\trT))^2 \big| \trT_i \sqsubseteq \trT \Big],
    \]
    where $\trT_i \sqsubseteq \trT$ means that $\trT$ is a continuation of $\trT_i$. This value is minimized by 
    \[
        E = \expect\big[R(\trT) \mid \trT \sqsubseteq \trT_i \big] = E^{\Pi}_i(\trT_i),
    \]
    where the last equality holds due to the fact that, as we have established, the rational Merlin in protocol $\widetilde{\Pi}$ will behave as the rational Merlin in protocol $\Pi$. 
    
    Finally, Arthur returns the value $\widetilde{\varphi} = (E, \varphi(\trT'))$, where $\trT'$ is just $\trT$ without the mentioning of $E$. From the established behavior of a rational Merlin it immediately follows that $E = E^{\Pi}_i(\trT_i)$ and also, if $\trT_i$ is produced in the interaction with a rational Merlin, then $\varphi(\trT') = f(x)$.
\end{proof}

\section{Proof of the theorem \ref{thm:main_hierarchy}}

\begin{proof}
    The inclusion $\clDRMA[k] \subseteq \clP^{\clFRMA[k]}$ is trivial.
    
    To prove the other inclusion, let $A$ be a language in $\clP^f$ for some ${f \in \clFRMA[k]}$, and let $\Pi$ be a $\clDRMA[k]$-protocol for $f$. 
    Also let $M$ be a polynomial-time Turing machine with access to oracle $f$ that recognizes $A$. Now consider the following $k$-round protocol $\widetilde \Pi$ for $A^f$ for input string $x$.
    \begin{enumerate}
        \item In the first round Merlin sends a message $m_1$, containing a number $l$, block of $l$ strings $y_1, y_2, \ldots, y_l$ and block of $l$ strings $m_1^{(1)}, m_1^{(2)}, \ldots, m_1^{(l)}$:
        \[
            m_1 = \left(l; y_1, \ldots, y_l;m_1^{(1)},\ldots, m_1^{(l)}\right).
        \]
        \item Arthur then emulates the machine $M$ on input $x$, using strings $y_i$ as answers to the oracle queries $x_1, \ldots, x_l$, and interprets strings $m_1^{(j)}$ as first messages of Merlin in the protocol $\Pi$ for $x_i$ (if emulation of $M$ makes not exactly $l$ queries, the final reward is set to be $0$).
        \item During all the following rounds, Merlin sends messages of the format $m_i = \left(m_i^{(1)}, \ldots, m_i^{(l)}\right)$, and Arthur sends messages of the format $a_i = \left(a_i^{(1)}, \ldots, a_i^{(l)}\right)$, where Arthur computes each $a_i^{(j)}$ as a response to transcript $$\trT_i^{(j)} = \left(x_j; m_1^{(j)}, a_1^{(j)}, \ldots, m_i^{(j)}\right),$$ and Merlin is expected to behave similarly.
        \item After all $k$ rounds, Arthur computes $\Delta = \min_j \Delta(\Pi)(x_j)$. and rewards $R_j$ for each {\lq\lq subprotocol\rq\rq}, and, if Merlin has not deviated from the protocol in an obvious way, pays him
        \[
            R = \frac{1}{2}R_1 + \frac{\Delta}{4}R_2 + \ldots + \frac{\Delta^{l-1}}{2^l}R_l + \frac{\Delta^l}{2^{l+1}}.
        \]
        Arthur then accepts $x$ if and only if it is accepted by the machine $M$ that gets the strings $y_j$ as answers to queries $x_j$ and in each {\lq\lq subprotocol\rq\rq} Merlin has proven that $f(x_j) = y_j$.
    \end{enumerate}
    
    Let us show that a polynomial-time randomized Arthur is capable of following the described protocol. In the protocol, the number of queries and the length of each query $x_j$ are bounded by the time the machine $M$ works on input $x$, hence for each $j$ we have $|x_j| = \poly(|x|)$ and $l = \poly(|x|)$. That means that each $y_j$, each message $a_i^{(j)}$, $m_i^{(j)}$ and each composed message $a_i, m_i$ is also of length $\poly(|x|)$.  Values $\Delta(\Pi)(x_j)$ are polynomial-time computable values of polynomial length by definition. Computation of $R$ requires a polynomial number of basic arithmetical operations with polynomial length rational numbers, so it can be performed in polynomial time as well.
    
    Now we show that a rational Merlin will not deviate from the protocol. Since $R$ is a strictly positive value, a rational Merlin has no incentive to violate the message format or provide $l$ that does not correspond to the number of queries in simulation of $M$ (which would lead to $0$ reward).
    
    By induction on $j$ we will prove that for all $j$:
    \begin{enumerate}
        \item[\textbf{(a)}] the $j$-th query $x_j$ of the simulation of $M$ is equal to the $j$-th query of the actual machine $M$ with access to $f$,
        \item[\textbf{(b)}] for all $i=1,2,\ldots,k$, the string $m_i^{(j)}$ is optimal for a rational Merlin in the corresponding {\lq\lq subprotocol\rq\rq},
        \item[\textbf{(c)}] $y_j = f(x_j)$.
    \end{enumerate}
    
    To prove {\bf (a)} for $j$ assuming everything is proven for all $j' < j$, notice that {\bf (a)} and {\bf (c)} for all $j' < j$ mean that the simulation of $M$, since it is a deterministic Turing machine, works in the same way as $M^f$ up until the $j'$-th query to the oracle, including that query.
    
    Now we prove {\bf (b)} for $j$ assuming it is proven for all $j' < j$. Assume that a rational Merlin did not deviate in the $j$-th subprotocol until the $i$-th round. Let $\widetilde m^{(j)}_i$ be an optimal $i$-th message in $j$-th subprotocol, and $\widehat m^{(j)}_i$ be a message suboptimal for the $j$-th subprotocol, let $\widehat m_i$ be the corresponding dishonest message. By the induction hypothesis a rational Merlin would choose optimal messages for all previous subprotocols, so
    \begin{align*}
        E_{i-1}^{\widetilde \Pi} (\trT_{i-1}) 
            &\geq 
        \sum_{j' < j} \frac{\Delta^{j'-1}}{2^{j'}}E_{i-1}^{\Pi}\left(\trT_{i-1}^{(j')}\right) + \frac{\Delta^{j-1}}{2^j}E_i^{\Pi}\left(\left(\trT_{i-1}^{(j)}, \widetilde{m}_i^{(j)}\right)\right) 
            \\&\geq 
        \sum_{j' < j} \frac{\Delta^{j'-1}}{2^{j'}}E_{i-1}^{\Pi}\left(\trT_{i-1}^{(j')}\right) + \frac{\Delta^{j-1}}{2^j}\left(E_i^{\Pi}\left(\left(\trT_{i-1}^{(j)}, \widehat{m}_i^{(j)}\right)\right) + \Delta(\Pi)(x_j)\right)
            \\&\geq
        \sum_{j' < j} \frac{\Delta^{j'-1}}{2^{j'}}E_{i-1}^{\Pi}\left(\trT_{i-1}^{(j')}\right) + \frac{\Delta^{j-1}}{2^j}E_i^{\Pi}\left(\left(\trT_i^{(j)}, \widehat{m}_i^{(j)}\right)\right) + \frac{\Delta^{j}}{2^j}
            \\&>
        \sum_{j' < j} \frac{\Delta^{j'-1}}{2^{j'}}E_{i-1}^{\Pi}\left(\trT_{i-1}^{(j')}\right) + \frac{\Delta^{j-1}}{2^j}E_i^{\Pi}\left(\left(\trT_i^{(j)}, \widehat{m}_i^{(j)}\right)\right) + \sum_{j < j' \leq l+1} \frac{\Delta^{j'-1}}{2^{j'}}
        \\&\geq 
        E_i^{\widetilde \Pi}\left(\trT_i, \widehat m_i\right),
    \end{align*}
    here all inequalities are due to the definitions, and the last inequality follows from the induction hypothesis for {\bf (b)} and an upper bound on geometric series (applying that all the values $R_j'$ do not exceed $1$): the expected reward for $\left(\trT, \widehat m_i\right)$ is a combination of the expected rewards for partial transcripts $\left(\trT_{i-1}^{(j')}\right)$ for $j' < j$ by the induction hypothesis for \textbf{(b)}, the expected reward for the partial transcript $\left(\trT_i^{(j)}, \widehat{m}_i^{(j)}\right)$, and the expected rewards for the rest of the subprotocols, which do not exceed $1$.
    
    Finally, {\bf (c)} for $j$ follows from {\bf (b)} for $j$, since a rational Merlin in $\widetilde \Pi$ behaves in the $j$-th subprotocol like a rational Merlin in $\Pi$.
\end{proof}

\section{Proof of the lemma \ref{lemma:compare_expects}}

\begin{proof}
    We will describe the probabilistic polynomial time algorithm that decides the language $\mathsf{Compare-Expectations}$ with probability at least $1/2$. 
    
    On input $(C_0, C_1)$ the algorithm  independently samples two uniformly distributed strings $r, r'$ of length $n$ and computes two bits: $c_0 = C_0(r)$  and bit ${c_1 = C_1(r')}$. If $c_0 < c_1$ (i.e., $c_0 = 0$ and $c_1 = 1$), then the algorithm accepts the input, if $c_0 > c_1$ --- rejects. If $c_0 = c_1$, the algorithm accepts with probability $1/2$ and otherwise rejects the input. 
    
    For $i=0,1$, denote by $E_i$ the value $\expect_{r\sim{\cal U}_n}C_i(r)$. Then
    \begin{gather*}
        \Pr[\textsl{ the algorithm accepts }] = 
        \Pr[c_0 < c_1] + \frac{1}{2}\Pr[c_0 = c_1] =\\
        = \Pr[c_0 = 0, c_1 = 1] + \frac{1}{2}\big(\Pr[c_0 = c_1 = 0] + \Pr[c_0 = c_1 = 1]\big) =\\
        = (1 - E_0)E_1 +
        \frac{1}{2}\big((1-E_0)(1 - E_1) + E_0E_1\big) = \\
        = \frac{1}{2} + \frac{E_1 - E_0}{2},
    \end{gather*}
    which is at least $\frac{1}{2}$ if and only if $(C_0, C_1) \in \mathsf{Compare-Expectations}$. The algorithm terminates in polynomial time since we assume that the size of a scheme is at least the size of its input.
\end{proof}

\section{Proof of theorem \ref{thm:parity_p_lower_bound}}

\begin{proof}
    Assume that such functions $R$ and $\phi$ and a constant $\alpha < 1/2$ exist. For each $m$, consider the polynomial
    \[
        Q_m(p) = \frac{1}{2^s}\sum_{a \in \{0,1\}^s} \sum_{r \in \{0,1\}^d} R(1^n, m, a, r) p^{|r|_1} (1-p)^{|r|_0},
    \]
    where $|r|_b$ is the number of bits of $r$ that are equal to $b \in \{0,1\}$. Note that for all $m$ we have $\deg Q_m \leq |r|_1 + |r|_0 = |r| = d = \poly(n)$.
    Clearly, for all $m$ we also have $\mathbb{E}_{a, r} R(1^n, m, a, r) = Q_m\left(\frac{k}{2^n}\right)$. 
    
    By our assumption, there exists a (not necessarily computable) function ${m_*: \{0, 1, \ldots, 2^n\} \to \{0,1\}^{< \alpha n}}$ such that for each $k$:
    \begin{enumerate}
        \item $\phi(1^n, m_*(k)) = (k \bmod 2)$,
        \item $Q_{m_*(k)}\left(\frac{k}{2^n}\right) > Q_{\widetilde m}\left(\frac{k}{2^n}\right)$ for any $\widetilde m$ for which $\phi(1^n, \widetilde m) \neq (k \bmod 2)$.
    \end{enumerate}
    
    By pigeonhole principle there exist a string $m_0, m_1$ such that $m_0 = m_*(k)$ for at least $\frac{2^n}{2^{\alpha n + 1}}$ even values of $k$. Since these numbers are all even, there are at least $\frac{2^n}{2^{\alpha n + 1}} - 1$ odd numbers interlaced with $m_*^{-1}(m_0)$. By applying the pigeonhole principle again, we have that there is a string $m_1$ such that there are at least $\frac{\frac{2^n}{2^{\alpha n + 1}} - 1}{2^{\alpha n + 1}} > 2^{(1 - 2\alpha) n - 3}$ odd numbers in $m_*^{-1}(m_1)$ that are interlaced with some subset of $m_*^{-1}(m_0)$. Let $k_0 < k_1 < \ldots < k_N$ denote the interlaced subsets of $m_*^{-1}(m_0)$ and $m_*^{-1}(m_1)$, where $k_i \in m_*^{-1}(m_0)$ for even values of $i$ and $k_i \in m_*^{-1}(m_1)$ for odd values of $i$. 
    
    Let $\widehat Q(p)$ denote the difference $Q_{m_0}(p) - Q_{m_1}(p)$. Observe that 
    $\widehat Q(k_i) >0 $ for all even $i$ and $\widehat Q(k_i) < 0$ for all odd $i$, and by the intermediate value theorem we have at least $N-1$ distinct zeros of $Q$. Since $N = \Omega(2^{(1-2\alpha)n})$ while $\deg \widehat Q \leq d = \poly(n)$, we have $\widehat Q \equiv 0$, while for all $i$ we have $\widehat Q (k_i) \neq 0$.
    The obtained contradiction yields the claimed result
\end{proof}

\end{document}